\newcommand{\me}{\mathrm{e}}
\newcommand{\md}{\mathrm{d}}
\newcommand{\nn}{\mathbbm{N}}
\newcommand{\rr}{\mathbbm{R}}
\newtheorem{definition}{Definition} %[section]
\newtheorem{prop}[definition]{Proposition}
\newtheorem{lemma}[definition]{Lemma}
\newtheorem{thm}[definition]{Theorem}
\newtheorem{corollary}[definition]{Corollary}
\newtheorem*{rep@theorem}{\rep@title}
\newcommand{\newreptheorem}[2]{%
\newenvironment{rep#1}[1]{%
 \def\rep@title{#2 \ref{##1} (restatement)}%
 \begin{rep@theorem}}%
 {\end{rep@theorem}}}
\def\ba#1\ea{\begin{align}#1\end{align}}
\def\ban#1\ean{\begin{align*}#1\end{align*}}
\newcommand{\be}{\begin{equation}}
\newcommand{\ee}{\end{equation}}
\def\benum{\begin{enumerate}}
\def\eenum{\end{enumerate}}
\def\squareforqed{\hbox{\rlap{$\sqcap$}$\sqcup$}}
\def\qed{\ifmmode\squareforqed\else{\unskip\nobreak\hfil
\penalty50\hskip1em\null\nobreak\hfil\squareforqed
\parfillskip=0pt\finalhyphendemerits=0\endgraf}\fi}
\def\endenv{\ifmmode\;\else{\unskip\nobreak\hfil
\penalty50\hskip1em\null\nobreak\hfil\;
\parfillskip=0pt\finalhyphendemerits=0\endgraf}\fi}
\newcommand{\tr}{\text{tr}}
\newcommand{\id}{\mathbbm{1}}
\newcommand{\<}{\langle}
\renewcommand{\>}{\rangle}
\def\be{\begin{equation}}
\def\ee{\end{equation}}
\def\ben{\begin{eqnarray}}
\def\een{\end{eqnarray}}
\def\bei{\begin{itemize}}
\def\eei{\end{itemize}}
\mathchardef\ordinarycolon\mathcode`\:
\def\vcentcolon{\mathrel{\mathop\ordinarycolon}}
\newcommand{\nc}{\newcommand}
 \nc{\proj}[1]{|#1\rangle\!\langle #1 |} 
\nc{\avg}[1]{\langle#1\rangle}
\nc{\conv}{\operatorname{conv}}
\nc{\smfrac}[2]{\mbox{$\frac{#1}{#2}$}} \nc{\Tr}{\operatorname{Tr}}
\nc{\ox}{\otimes} \nc{\dg}{\dagger} \nc{\dn}{\downarrow}
\nc{\lmax}{\lambda_{\text{max}}}
\nc{\lmin}{\lambda_{\text{min}}}
\nc{\csupp}{{\operatorname{csupp}}}
\nc{\qsupp}{{\operatorname{qsupp}}} \nc{\var}{\operatorname{var}}
\nc{\rar}{\rightarrow} \nc{\lrar}{\longrightarrow}
\nc{\poly}{\operatorname{poly}}
\nc{\polylog}{\operatorname{polylog}} \nc{\Lip}{\operatorname{Lip}}
\nc{\Om}{\Omega}
\nc{\wt}[1]{\widetilde{#1}}
\def\>{\rangle}
\def\<{\langle}
\nc{\glneq}{{\raisebox{0.6ex}{$>$}  \hspace*{-1.8ex} \raisebox{-0.6ex}{$<$}}}
\nc{\gleq}{{\raisebox{0.6ex}{$\geq$}\hspace*{-1.8ex} \raisebox{-0.6ex}{$\leq$}}}
\nc{\vholder}[1]{\rule{0pt}{#1}}
\nc{\wh}[1]{\widehat{#1}}
\nc{\h}[1]{\widehat{#1}}
\nc{\ob}[1]{#1}
\def\beq{\begin {equation}}
\def\eeq{\end {equation}}
\def\be{\begin{equation}}
\def\ee{\end{equation}}
\nc{\eq}[1]{(\ref{eq:#1})} 
\nc{\eqs}[2]{\eq{#1} and \eq{#2}}
\nc{\eqn}[1]{Eq.~(\ref{eqn:#1})}
\nc{\eqns}[2]{Eqs.~(\ref{eqn:#1}) and (\ref{eqn:#2})}
\nc{\region}{\cS\cW}
\begin{document}

\title{{\Large Equivalence of Statistical Mechanical Ensembles for Non-Critical Quantum Systems}}

\author{Fernando G.S.L. Brand\~ao}
\email{f.brandao@ucl.ac.uk}
\affiliation{Quantum Architectures and Computation Group, Microsoft Research, Redmond, WA}
\affiliation{Department of Computer Science, University College London WC1E 6BT, United Kingdom}

\author{Marcus Cramer}
 \email{marcus.cramer@uni-ulm.de}
\affiliation{Institut f\"ur Theoretische Physik, Universit\"at Ulm, Germany}

\begin{abstract}
 
We consider the problem of whether the canonical and microcanonical ensembles are locally equivalent for short-ranged quantum Hamiltonians of $N$ spins arranged on a $d$-dimensional lattices. For any  temperature for which the system has a finite correlation length, we prove that the canonical and microcanonical state
are approximately equal on regions containing up to $O(N^{1/(d+1)})$ spins. The proof rests on a variant of the Berry--Esseen theorem for quantum lattice systems and ideas from quantum information theory. 

\end{abstract}

\maketitle

\parskip .75ex

%%%%%%%%%%%%%%%%%%%%%%%%%%%%%%%%%%%%%%%%%%%%%%%%%%%%%%%%%%%%%%%%%%%%%%%%

\section{Introduction}
In statistical mechanics there are two main ensembles (at zero chemical potential) that can be used to compute equilibrium properties of large systems: the microcanonical and canonical ensembles. Roughly, the first describes the physics of a system that is isolated and has total fixed energy. The second describes the physics of a system that is at thermal equilibrium with a large environment at fixed temperature. It turns out that in many cases, although not all, the two ensembles give the same predictions for very large systems. There is a long sequence of studies aiming at elucidating under what conditions the two ensembles can be used interchangeably (see e.g. \cite{LL69, Lima1, Lima2, Tou09, Geo95, MAMW13} and the discussion below). 

In textbooks the canonical ensemble is commonly introduced by considering the microcanonical ensemble of the system and a large environment and restricting to observables acting on the system only. Under the assumption that the interactions of the system and environment are very weak, the canonical ensemble can be derived. However in many situations the assumption of weak coupling is not justified. For example, in many closed quantum systems small regions thermalize \cite{PSSV11,EFG14}; in this case the remaining of the system is acting as an environment. It is therefore an interesting problem to find more general conditions that guarantee the equivalence of the two ensembles. Our main goal is to give one such condition: We show that short ranged interactions and a finite correlation length lead to the equivalence of ensembles for every sufficiently large finite volume. The condition of a finite correlation length (and short ranged interactions) is known to be required (see e.g. \cite{Des}).

\section{Results}

We let $\Lambda = \{1, \ldots, n \}^d$ be a finite collection of {\it vertices} or {\it lattice sites} in $d$ dimensions with $N = |\Lambda|=n^d$ sites. 
We consider local Hamiltonians, acting on the Hilbert space $\mathcal{H}=\otimes_{i\in\Lambda}\mathcal{H}_i$, $\dim \mathcal{H}_i= D$, given by
\begin{equation}
\label{ham}
H = \sum_{i \in \Lambda} H_i=\sum_\nu E_\nu|\nu\rangle\langle \nu|,
\end{equation}
where we assume that the $H_i$ are bounded, $\|H_i\| \le 1$, and local in the sense that $H_i$ acts only on sites $j$ with $\text{dist}(i,j)\le k$ (for the Manhattan metric $\text{dist}(.,.)$ in the lattice). 

For such $k$-local Hamiltonians, we let $\rho_{T} := e^{-H / T} / Z(T)$ be the canonical state at temperature $T$ (also known as Gibbs state or thermal state) and $Z(T) := \tr(e^{-H / T} )$ the partition function (we set Boltzmann's constant to unit). In the canonical ensemble at temperature $T$, averages are computed using $\rho_T$. The energy density at temperature $T$ is given by
\begin{equation}
u(T) := \frac{1}{N} \tr(H \rho_T),
\end{equation}
the specific heat capacity at temperature $T$ by
\begin{equation}
c(T) := \frac{d u(T')}{d T'} \bigg|_{T' = T} =  \frac{1}{N T^2} \left(  \tr[H^2 \rho_{T}] - (\tr[H \rho_{T}])^2 \right),
\end{equation}
and the entropy density by\footnote{Throughout, we denote by $\ln$ ($\log$) the logarithm to the base $\me$ ($2$).}
\begin{equation}
s(T) := - \frac{1}{N} \tr[\rho_T\ln(\rho_T)].
\end{equation}

Given regions $X,Y\subset\Lambda$, we denote by $\tr_{\Lambda \backslash X}$ the partial trace over the complement of $X$ in $\Lambda$
and for states $\rho \in {\cal D}(\mathcal{H})$ (the set of density matrices acting on $\mathcal{H}$), we denote $\rho_{XY}=\tr_{\Lambda \backslash (X\cup Y)}(\rho)$. Given two states $\rho,\sigma$, their trace-norm distance is 
\begin{equation}
\Vert \rho - \sigma \Vert_1 :=  \tr(|\rho - \sigma|) = \max_{0 \leq M \leq I} 2\tr(M(\rho - \sigma))
\end{equation}
and quantifies how distinguishable the two states are.

We say a state $\rho \in {\cal D}(\mathcal{H})$  has $(\xi,z)$-exponentially decaying correlations (or a $(\xi,z)$-finite correlation length) if
there are $\xi>0$ and $z\ge 0$ such that for every two regions $X, Y\subset\Lambda$ with $\text{dist}(X,Y)> 0$,
\begin{eqnarray} \label{decaycor}
\text{cor}_{\rho}(X, Y) := \max_{\substack{P, \hspace{0.05 cm} Q \\ \text{supp}(P)\subset X\\ \text{supp}(Q) \subset Y}} \frac{|\tr((P \otimes Q)(\rho_{XY} - \rho_{X}\otimes \rho_{Y}))|}{\Vert P \Vert  \Vert Q \Vert } \leq N^z\me^{- \text{dist}(X, Y) / \xi},
\end{eqnarray}
where
\begin{equation}
\text{dist}(X, Y) := \min_{x \in X, y \in Y} \text{dist}(x, y).
\end{equation}

Given $e\in\rr$ and $\delta > 0$, let
\begin{equation}
M_{e, \delta} := \{ \nu :  | E_\nu - e N   | \leq \delta \sqrt{N}  \},
\end{equation}
and define the microcanonical state of mean energy $e$ and energy spread $\delta \sqrt{N}$ by
\begin{equation}
\tau_{e, \delta} := \frac{1}{|M_{e, \delta}|} \sum_{\nu \in M_{e, \delta}} |\nu \rangle \langle \nu |.
\end{equation} 
In the microcanonical ensemble averages are computed using $\tau_{e, \delta}$.

Finally, for a Gibbs state corresponding to a Hamiltonian as in Eq.~\eqref{ham} and with $(\xi,z)$-exponentially decaying correlations, we define (see Lemma~\ref{thmBerryEsseenThm}):
\begin{equation}
\begin{split}
\Delta_{k,\xi,z,T}&:=C_d \frac{(\max\{k,\xi\}(z+1))^{2d}}{\sqrt{T^2c(T)}}\max\left\{\frac{1}{\max\{k,\xi\}(z+1)\ln(N)},\frac{1}{T^2c(T)}\right\},
\end{split}
\end{equation}
where $C_d\ge 1$ is a constant which only depends on the dimension of the lattice $\Lambda$.

We can now state the main result. It shows that for general quantum many-body systems at non-critical temperatures (meaning that the canonical state has a finite correlation length), the canonical ensemble gives essentially the same predictions as the microcanonical ensemble, for every observable which acts on sufficiently small regions.

\vspace{0.2 cm}

\begin{thm}  \label{equivalenceensembles} Let $\mathcal{C}_l$ be the set of all hypercubes contained in $\Lambda=\{1,\dots,n\}^d$  with edge length $l\in\nn$, $1\le l\le\frac{n+1}{2}$, and let $N=n^d>2$. 
Let the canonical state $\rho_T$ (corresponding to a $k$-local Hamiltonian as in Eq.~\eqref{ham}) with energy density $u(T)$ and specific heat capacity $c(T)$ have $(\xi,z)$-exponentially decaying correlations. Let
the microcanonical state $\tau_{e,\delta}$ have mean energy $e$ such that
\begin{equation} \label{eq1}
 |e-u(T)|\le  \sqrt{c(T)T^2/N}
 \end{equation}
  and energy spread $\delta \sqrt{N}$ such that
\begin{equation} \label{eq2}
28\Delta_{k,\xi,z,T}\sqrt{c(T)T^2}\frac{\ln^{2d}(N)}{\sqrt{N}}\le \delta\le \sqrt{c(T)T^2}.
\end{equation}
Let $\epsilon>0$. If
\begin{equation} \label{complicatedbound}
\frac{56\sqrt{c(T)}\Delta_{k,\xi,z,T}\ln^{2d}(N)
+(5+\epsilon z)\ln(N)}{\epsilon \ln(2)}+
\frac{2\xi\ln(D)l^d+l+2}{\xi \ln(2) }\le \left(\frac{ \epsilon N}{\ln^d(4)\xi^d}\right)^{\frac{1}{d+1}}
\end{equation}
then
\begin{equation}
\mathop{\mathbb{E}}_{C \in {\cal C}_l}\|(\tau_{e,\delta})_{C}-(\rho_T)_{C}\|_1 \le 7\sqrt{\epsilon},
\end{equation}
where the expectation is taken uniformly over $\mathcal{C}_l$.
\end{thm}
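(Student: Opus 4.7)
The plan is to bound the average local trace distance between $\tau_{e,\delta}$ and $\rho_T$ by first controlling a global relative entropy via the Berry--Esseen-type Lemma~\ref{thmBerryEsseenThm}, and then invoking the exponential decay of correlations of $\rho_T$ together with a hypercube-tiling argument to convert that global bound into an averaged local one. Lemma~\ref{thmBerryEsseenThm} is what quantifies the concentration of the energy distribution of $\rho_T$, and the decay-of-correlations assumption is what makes any global-to-local conversion possible at all.

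First, I would apply Lemma~\ref{thmBerryEsseenThm} to the classical probability distribution $\nu \mapsto \langle \nu|\rho_T|\nu\rangle$ in order to lower-bound the weight $p := \tr(P_{e,\delta}\rho_T)$ of the microcanonical window under the canonical ensemble (with $P_{e,\delta}$ the spectral projector onto $M_{e,\delta}$). Since \eqref{eq1} places $e$ within one standard deviation of $u(T)$ and the lower bound in \eqref{eq2} makes $\delta\sqrt N$ exceed the Berry--Esseen error bar, a Gaussian comparison yields $p \gtrsim \delta/\sqrt{c(T)T^2}$ and in particular $-\ln p = O(\ln N)$. I then introduce the post-selected state $\widetilde\rho := P_{e,\delta}\rho_T P_{e,\delta}/p$, which is supported on the same spectral subspace as $\tau_{e,\delta}$ but reweighted by the Boltzmann factors. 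By construction, $S(\widetilde\rho \Vert \rho_T) = -\ln p = O(\ln N)$. The triangle inequality then reduces the theorem to bounding $\mathbb{E}_C\|(\widetilde\rho)_C - (\rho_T)_C\|_1$ and $\mathbb{E}_C\|(\widetilde\rho)_C - (\tau_{e,\delta})_C\|_1$ separately; the latter compares two states differing only by a Boltzmann reweighting that varies at most by $e^{2\delta\sqrt N/T}$ across $M_{e,\delta}$ but whose \emph{local} variation on regions of size $l^d\ll N$ is controlled by the same tiling argument as the first piece.

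The technical heart, and the step I expect to be the main obstacle, is the global-to-local conversion of the relative-entropy bound. A naive Pinsker argument applied to $S(\widetilde\rho \Vert \rho_T) = O(\ln N)$ only bounds the \emph{global} trace distance, which is useless for a claim about local marginals on $l^d = o(N)$ sites. To do better I would tile $\Lambda$ by $\sim N/l^d$ disjoint hypercubes of edge $l$, attach a buffer shell of thickness $\sim \xi \ln N$ around each to absorb the $N^z e^{-l/\xi}$ correlation tail of $\rho_T$, and use subadditivity of relative entropy combined with the entropy bound $S((\widetilde\rho)_C) \le l^d \ln D$ on a buffered region to distribute the global $O(\ln N)$ of relative entropy across the tiles. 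Tile-by-tile Pinsker and averaging over $C \in \mathcal{C}_l$ then yield a bound of the schematic form
$$
\mathbb{E}_C\|(\widetilde\rho)_C - (\rho_T)_C\|_1 \lesssim \sqrt{\tfrac{l^d}{N}\bigl(-\ln p + l^d\ln D\bigr)} + N^z e^{-l/\xi},
$$
and an analogous bound for the $\widetilde\rho$-vs-$\tau_{e,\delta}$ piece using that the Boltzmann reweighting is essentially constant on bulk regions of volume $l^d$. The hypothesis \eqref{complicatedbound} is precisely the statement that $l$ can be chosen to balance the volume contribution $l^d/N$ against the boundary, correlation-decay and Berry--Esseen contributions, at which point the region volume $l^d$ saturates the threshold $O(N^{1/(d+1)})$ advertised in the abstract.
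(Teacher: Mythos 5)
Your high-level strategy --- control a relative entropy via Lemma~\ref{thmBerryEsseenThm} and then convert it to an averaged local trace-norm bound by tiling $\Lambda$ into buffered hypercubes and invoking decay of correlations --- is indeed the paper's (Proposition~\ref{weakerProp} together with Lemma~\ref{opinqualitymicromacro}). But the detour through $\widetilde\rho = P_{e,\delta}\rho_T P_{e,\delta}/p$ creates a genuine gap. Relative entropy has no triangle inequality, so after splitting $\|\tau_{e,\delta}-\rho_T\|_1$ you must separately control the $\widetilde\rho$-vs-$\tau_{e,\delta}$ piece, and the only quantitative route you offer is ``the same tiling argument.'' The input to that argument is a global relative entropy, and here the relevant one is $S(\tau_{e,\delta}\|\widetilde\rho)$: the eigenvalue ratio between the two states on $M_{e,\delta}$ ranges over $\me^{\pm 2\delta\sqrt N/T}$, so the best bound one gets this way is of order $2\delta\sqrt N/(T\ln 2)$. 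At the upper end of the allowed range in Eq.~\eqref{eq2}, $\delta\sim\sqrt{c(T)T^2}$, this is $\Theta(\sqrt N)$, which overwhelms the $\epsilon N^{1/(d+1)}$ budget that the tiling machinery tolerates, for every $d\ge 1$. The remark that the Boltzmann reweighting ``varies slowly on bulk regions of size $l^d$'' is not a proof; the tiling lemma consumes the global quantity, which is too big. (There is also a structural obstacle: the tiling argument of Proposition~\ref{weakerProp} needs the \emph{reference} state to have $(\xi,z)$-decaying correlations, which is assumed for $\rho_T$ but not known for $\widetilde\rho$ or $\tau_{e,\delta}$, so you cannot simply re-run it with either of those as the reference.) The paper's Lemma~\ref{opinqualitymicromacro} bounds $S(\tau_{e,\delta}\|\rho_T)$ directly, with no intermediate state, by comparing against a second, auxiliary microcanonical window of width $\tilde\delta=\delta_0\sim\Delta_{k,\xi,z,T}\ln^{2d}(N)\,\sigma/N$ --- the narrowest width for which Berry--Esseen still certifies a Gaussian-sized weight --- centered at a shifted $\tilde e$ chosen so that $M_{\tilde e,\delta_0}\subset M_{e,\delta}$. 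The Boltzmann spread that survives is then only $2\delta_0\sqrt N/T=O(\sqrt{c(T)}\,\Delta_{k,\xi,z,T}\ln^{2d}N)$, polylogarithmic rather than $\sqrt N$, while the containment gives $|M_{\tilde e,\delta_0}|\le|M_{e,\delta}|$ to cancel the $-S(\tau_{e,\delta})$ term.

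A secondary imprecision is in the global-to-local step itself. ``Subadditivity of relative entropy'' is not the right tool; relative entropy against a non-product reference enjoys no useful subadditivity. The paper's Proposition~\ref{strongerProp} instead uses the quantum substate theorem to pass from $S(\tau\|\rho_T)$ to the smooth max-relative entropy $S^{\varepsilon}_{\max}$, Lemma~\ref{tracenormboundfromxi} to replace $\rho_{C_1\cdots C_m}$ by the product $\rho_{C_1}\otimes\cdots\otimes\rho_{C_m}$ (this is where the buffer $r$ absorbs the $N^{z}\me^{-r/\xi}$ correlation tail), the reference-shifting Lemma~\ref{datarenner}, \emph{super}additivity of relative entropy against a product reference, and finally Pinsker. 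Your schematic bound with $l^{d}\ln D$ under the square root is a red herring: no local-entropy count enters at that stage, and the correct balance is between $(l+r)^{d}/N$ and $S(\tau_{e,\delta}\|\rho_T)/\epsilon$. The factor $1/\epsilon$ from the substate theorem is precisely what drives the $l^{d}\lesssim N^{1/(d+1)}$ threshold in Eq.~\eqref{complicatedbound}.
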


We note the following: 
\begin{enumerate}

\item Eq.~(\ref{complicatedbound}) is satisfied whenever $N$ is sufficiently large and $l^d \leq O( N^{1/(d+1)})$.

\item We do not need to take the average over regions $C \in {\cal C}_l$ if we assume the Hamiltonian is translation invariant.

\item The condition of a finite correlation length is necessary. Indeed the two ensembles differ in the Ising model approaching the critical point, when the correlation length diverges, for regions of size $O(\log(N))$ (see e.g. \cite{Des}). It is an open question if a similar result can be obtained for critical systems and small enough regions, assuming that correlations decay algebraically; in our proof it is important that the correlations decay exponentially as in Eq.~(\ref{decaycor}). 

\item 
Any system is expected to have a finite correlation length whenever it is away from a critical point. One can rigorously show that one-dimensional systems always have a finite correlation length at any temperature \cite{Ara69}, while in any dimension there is a critical temperature (depending only on the geometry of the lattice) above which every system has a finite correlation length \cite{KGKRE14}.

\end{enumerate}

An important step in the proof of the theorem will be to establish the following proposition, which we believe is of independent interest and which we prove in a stronger version (Proposition~\ref{strongerProp}) in Section~\ref{proof:prop1}. It shows that two states $\tau$ and $\rho$ are locally equivalent whenever their quantum relative entropy
\begin{equation}
S(\tau \| \rho) = \tr(\tau (\log \tau - \log \rho))
\end{equation}
is $O(N^{1/(d+1)})$ and $\rho$ has finite correlation length.

\begin{prop} \label{weakerProp} Let $\mathcal{C}_l$ be the set of all hypercubes contained in $\Lambda=\{1,\dots,n\}^d$  with edge length $l\in\nn$, $1\le l\le\frac{n+1}{2}$, and let $N=n^d>1$. Let $\epsilon>0$, the states $\rho$, $\tau$, and $l$ such that $\rho$ has $(\xi,z)$-exponentially decaying correlations and such that 
\begin{equation}
\label{simple:condition}
 \frac{S(\tau\|\rho)+3}{\epsilon }+
\frac{2\xi\ln(D)l^d+l+2}{\xi \ln(2) }+\log(N^{z+1})\le \left(\frac{ \epsilon N}{\ln^d(4)\xi^d}\right)^{\frac{1}{d+1}}.
\end{equation}
Then
\begin{equation}
\label{simple:result}
\mathop{\mathbb{E}}_{C \in {\cal C}_l}\|\tau_{C}-\rho_{C}\|_1 \le 7\sqrt{\epsilon},
\end{equation}
where the expectation is taken uniformly over $\mathcal{C}_l$.
\end{prop}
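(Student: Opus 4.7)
The plan is to distribute the \emph{global} relative entropy $S(\tau\|\rho)$ across many disjoint local cubes by exploiting the near-product structure of $\rho$ on well-separated regions, and then close via Pinsker's inequality combined with an averaging argument.

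I would first pack $\Lambda$ with a regular grid of $m\approx(n/L)^d$ disjoint $l$-cubes $C_1,\dots,C_m\in\mathcal{C}_l$ separated pairwise by at least $L-l$, where $L$ is a buffer parameter to be chosen. Writing $S:=\bigcup_i C_i$ and $\tilde\rho_S:=\bigotimes_i\rho_{C_i}$ for the product of marginals, a direct computation using the product structure of $\tilde\rho_S$ yields the identity
\begin{equation*}
S(\tau_S\|\tilde\rho_S)=\sum_{i=1}^m S(\tau_{C_i}\|\rho_{C_i})+\Bigl(\sum_i S(\tau_{C_i})-S(\tau_S)\Bigr),
\end{equation*}
whose bracketed piece (the total correlation of $\tau_S$) is non-negative, giving the superadditivity bound $\sum_i S(\tau_{C_i}\|\rho_{C_i})\le S(\tau_S\|\tilde\rho_S)$. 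Iterating the correlation-decay hypothesis \eqref{decaycor} across the $m$ cubes, together with the standard conversion from operator norm to trace norm (which costs a factor $D^{2|X|+2|Y|}$ per pair), produces $\|\rho_S-\tilde\rho_S\|_1\le\varepsilon_0$ whenever $L\gtrsim\xi\bigl(l^d\ln D+(z+1)\ln N+\ln(1/\varepsilon_0)\bigr)$, which is exactly the origin of the buffer contributions $(2\xi\ln D\cdot l^d+l+2)/(\xi\ln 2)$ and $\log N^{z+1}$ in the LHS of \eqref{simple:condition}.

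The key technical step --- and the main obstacle --- is to convert this trace-distance closeness into the relative-entropy bound
\begin{equation*}
S(\tau_S\|\tilde\rho_S)\;\le\;S(\tau_S\|\rho_S)+O(1),
\end{equation*}
since the difference equals $\tr(\tau_S(\log\rho_S-\log\tilde\rho_S))$ and cannot be controlled by $\|\rho_S-\tilde\rho_S\|_1$ alone (relative entropy is not continuous in its second argument). I would attempt this via a smoothing argument, replacing $\tilde\rho_S$ (or $\rho_S$) by a nearby state $\rho^{\sharp}$ with bounded max-divergence $D_{\max}(\rho_S\|\rho^{\sharp})\le\log(1/\delta)$, and absorbing the resulting $O(\log(1/\delta))$ slack into the constant ``$+3$'' appearing in $(S(\tau\|\rho)+3)/\epsilon$ in \eqref{simple:condition}. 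Combined with the data-processing inequality $S(\tau_S\|\rho_S)\le S(\tau\|\rho)$, this delivers $\sum_i S(\tau_{C_i}\|\rho_{C_i})\le S(\tau\|\rho)+3$.

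Finally, averaging over all $L^d$ lattice translations of the partition grid ensures that every $C\in\mathcal{C}_l$ occurs as one of the $C_i$ with equal weight, so
\begin{equation*}
\mathop{\mathbb{E}}_{C\in\mathcal{C}_l}S(\tau_C\|\rho_C)\;\le\;\frac{S(\tau\|\rho)+3}{m}.
\end{equation*}
Choosing $m\ge(S(\tau\|\rho)+3)/\epsilon$ (the first term on the LHS of \eqref{simple:condition}) makes the right-hand side at most $\epsilon$; Pinsker's inequality $\|\tau_C-\rho_C\|_1\le\sqrt{2\,S(\tau_C\|\rho_C)}$ together with concavity of $\sqrt{\cdot}$ then yields $\mathbb{E}_C\|\tau_C-\rho_C\|_1\le\sqrt{2\epsilon}$, consistent with (and implying) the $7\sqrt\epsilon$ bound of \eqref{simple:result}, where the constant $7$ leaves room for the slack introduced by the smoothing. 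The three coupled requirements $m\ge(S+3)/\epsilon$, $L\ge$ (buffer), and $mL^d\le N$ combine precisely into \eqref{simple:condition}, whose RHS scaling $\sim N^{1/(d+1)}$ emerges from eliminating $L$ through $L^d\le N/m$.
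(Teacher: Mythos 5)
Your high-level architecture matches the paper's: pack $\Lambda$ with a grid of $l$-cubes separated by a buffer $r$, use the correlation-decay hypothesis and \lemref{tracenormboundfromxi} to make $\rho_{C_1\cdots C_m}$ close to the product of marginals $\tilde\rho:=\rho_{C_1}\otimes\cdots\otimes\rho_{C_m}$, invoke superadditivity of relative entropy over the product reference state, then close with Pinsker, Jensen, and an average over grid translations. You also correctly identify the real obstacle: the relative entropy is not continuous in its second argument, so $\|\rho_S-\tilde\rho_S\|_1$ being small does \emph{not} control $S(\tau_S\|\tilde\rho_S)-S(\tau_S\|\rho_S)$.

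The gap is precisely in your proposed fix. You assert that a smoothing argument produces $S(\tau_S\|\tilde\rho_S)\le S(\tau_S\|\rho_S)+O(1)$ ``by replacing $\tilde\rho_S$ (or $\rho_S$) by a nearby state $\rho^{\sharp}$ with $D_{\max}(\rho_S\|\rho^{\sharp})\le\log(1/\delta)$ and absorbing the $O(\log(1/\delta))$ slack into the $+3$.'' This cannot work as stated, for two reasons. First, superadditivity $\sum_i S(\tau_{C_i}\|\rho_{C_i})\le S(\tau_S\|\,\cdot\,)$ requires the reference state to be the product $\tilde\rho_S$ exactly; replacing $\tilde\rho_S$ by some non-product $\rho^{\sharp}$ breaks the step you are relying on, and you cannot then pass back to $\tilde\rho_S$ without running into the same discontinuity you are trying to dodge. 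Second, and more fundamentally, no $O(1)$ additive slack exists: trace-norm proximity of $\rho_S$ and $\tilde\rho_S$ is compatible with $S(\tau_S\|\tilde\rho_S)-S(\tau_S\|\rho_S)$ being arbitrarily large (or infinite, if $\tau_S$ has support outside $\supp\tilde\rho_S$), and the eigenvalues of $\rho_S$ that govern this difference are typically exponentially small in $|S|$. What the paper actually does is smooth the \emph{first} argument, not the second, and pay a multiplicative $1/\epsilon$, not an additive constant: the quantum substate theorem (\cite{substate,substate_improved}) produces a state $\tilde\pi$ with $\|\tilde\pi-\tau_{C_1\cdots C_m}\|_1\le 2\sqrt\epsilon$ and $S_{\max}(\tilde\pi\|\rho_{C_1\cdots C_m})\le (S(\tau\|\rho)+1)/\epsilon+\log\frac{1}{1-\epsilon}$. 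Only then can one change the reference state to the product $\tilde\rho_S$, because $S_{\max}$ \emph{is} stable under trace-norm perturbations of the reference — this is the Datta--Renner \lemref{datarenner}, which yields a further perturbed $\pi$ with $S_{\max}(\pi\|\tilde\rho_S)$ still $\lesssim (S(\tau\|\rho)+2)/\epsilon$, provided $2^{\lambda}\|\rho_S-\tilde\rho_S\|_1<1$ (so the buffer $r$ must beat an \emph{exponential} in $(S+1)/\epsilon$, which is why the condition involves a Lambert $W$). Superadditivity, $S\le S_{\max}$, and Pinsker are then applied to $\pi$, with a final triangle inequality picking up the $O(\sqrt\epsilon)$ distance between $\pi_C$ and $\tau_C$. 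The term $(S(\tau\|\rho)+3)/\epsilon$ in \eqref{simple:condition} is therefore not a generous constant absorbing smoothing slack; it is exactly the substate theorem's $1/\epsilon$ penalty, which your sketch does not account for. As written, your proof would produce a claim one power of $\epsilon$ stronger than what is provable by this route, and the step that would justify it is missing.
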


We note that the state $\rho$ does not need to be a thermal state, we merely demand it to have $(\xi,z)$-exponentially decaying correlations.
If we do assume it is a thermal state then, as $TS(\tau\|\rho_T)=F_T(\tau)-F_T(\rho_T)$, where, given a Hamiltonian $H$ and temperature $T$, the free energy of a state $\tau$ is given by 
\begin{equation}
F_T(\tau) = \tr(H \tau) - T S(\tau),
\end{equation}
the proposition shows that for temperatures away from criticality, any state of small free energy must have approximately thermal averages for local observables. Theorem \ref{equivalenceensembles} will then follow from Proposition \ref{weakerProp} and showing that $\tau_{e, \delta}$ has small free energy whenever $\rho_{T}$ has a finite correlation length.

\subsection{Beyond Microcanonical States}
How crucial is the use of the microcanonical ensemble? Proposition \ref{weakerProp} shows that not only the microcanonical state, but any state of small enough free energy is locally thermal. It turns out that Theorem \ref{equivalenceensembles} can be extended in two additional respects:
(1) It applies to any state that lives in the microcanonical subspace and has sufficiently large entropy.
(2) Following \cite{PSW06, GLTZ06}, it applies to a generic state in the microcanonical subspace, with overwhelming probability with respect to the Haar measure.

\begin{corollary}\label{corollaryProp} Let $\mathcal{C}_l$ be the set of all hypercubes contained in $\Lambda=\{1,\dots,n\}^d$  with edge length $l\in\nn$, $1\le l\le\frac{n+1}{2}$, and let $N=n^d>2$. 
Let the canonical state $\rho_T$ (corresponding to a $k$-local Hamiltonian as above), with energy density $u(T)$ and specific heat capacity $c(T)$, have $(\xi,z)$-exponentially decaying correlations. Let  $M_{e,\delta}$ with $e,\delta$ such that Eqs. (\ref{eq1}) and (\ref{eq2}) hold true. Let $\epsilon>0$ and 
\begin{equation}
\frac{56\sqrt{c(T)}\Delta_{k,\xi,z,T}\ln^{2d}(N)
+(5+\epsilon z)\ln(N)}{\epsilon \ln(2)}+
\frac{2\xi\ln(D)l^d+l+2}{\xi \ln(2) }\le \frac{1}{2}\left(\frac{ \epsilon N}{\ln^d(4)\xi^d}\right)^{\frac{1}{d+1}}.
\end{equation}

\begin{enumerate}
\item Let $\tau$ a state on the subspace spanned by $\{|\nu\rangle\}_{\nu\in M_{e,\delta}}$ with entropy 
\begin{equation} 
S(\tau)
\ge \log(|M_{e, \delta}|)-\frac{\epsilon}{2}\left(\frac{ \epsilon N}{\ln^d(4)\xi^d}\right)^{\frac{1}{d+1}}.
\end{equation} 
Then $\mathop{\mathbb{E}}_{C \in {\cal C}_l}\|\tau_{C}-(\rho_T)_{C}\|_1 \le 7\sqrt{\epsilon}$.
\item Let $\tau$ be a pure state drawn from the Haar measure on $\text{span}\{|\nu\rangle\}_{\nu\in M_{e,\delta}}$. Then, with probability at least $1-2\me^{-1/\eta}$,
\begin{equation}
\mathop{\mathbb{E}}_{C \in {\cal C}_l}\|\tau_{C}-(\rho_T)_{C}\|_1 \le7\sqrt{\epsilon}+
 \eta+\frac{D^{l^d}}{\sqrt{18\pi^3}}\eta^{3/2},
\end{equation}
with
\begin{equation} 
\eta:=18^{1/3}\pi\exp\left[-\frac{N}{3}\left(s(T)-\frac{2\sqrt{c(T)}+2}{\sqrt{N}}\right)\right].
\end{equation}
\end{enumerate}
Here, the expectation is taken uniformly over $\mathcal{C}_l$.
\end{corollary}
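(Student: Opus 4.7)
The plan is to deduce both parts of the corollary from Proposition~\ref{weakerProp} (applied with $\rho=\rho_T$) together with Theorem~\ref{equivalenceensembles}, which is applicable because the corollary's hypothesis differs from~(\ref{complicatedbound}) only by a factor of $1/2$ on the right-hand side; hence $\mathop{\mathbb{E}}_{C\in\mathcal{C}_l}\|(\tau_{e,\delta})_C-(\rho_T)_C\|_1\le 7\sqrt{\epsilon}$ is available for free. The two parts differ only in how they bound the remaining piece: Part~1 controls $S(\tau\|\rho_T)$ and invokes Proposition~\ref{weakerProp} directly, while Part~2 controls $\|\tau_C-(\tau_{e,\delta})_C\|_1$ via Haar concentration and concludes by the triangle inequality.

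For Part~1, observe that since $\rho_T$ is diagonal in the eigenbasis $\{|\nu\rangle\}$ and both $\tau$ and $\tau_{e,\delta}$ are supported on $\text{span}\{|\nu\rangle\}_{\nu\in M_{e,\delta}}$,
\begin{equation*}
S(\tau\|\rho_T) = S(\tau_{e,\delta}\|\rho_T) + \bigl(\log|M_{e,\delta}|-S(\tau)\bigr) + \frac{\tr\bigl(H(\tau-\tau_{e,\delta})\bigr)}{T\ln 2}.
\end{equation*}
The middle term is at most $\tfrac{\epsilon}{2}\bigl(\epsilon N/(\ln^d(4)\xi^d)\bigr)^{1/(d+1)}$ by the entropy hypothesis, the energy term is bounded by $2\delta\sqrt{N}/(T\ln 2)$ since $|E_\nu-eN|\le \delta\sqrt{N}$ on $M_{e,\delta}$, and $S(\tau_{e,\delta}\|\rho_T)$ is precisely the quantity controlled in the proof of Theorem~\ref{equivalenceensembles} (one expands $S(\tau_{e,\delta}\|\rho_T)=\tr(H\tau_{e,\delta})/(T\ln 2)+\log Z(T)/\ln 2-\log|M_{e,\delta}|$ and uses Lemma~\ref{thmBerryEsseenThm} to lower-bound $|M_{e,\delta}|$, producing the $56\sqrt{c(T)}\Delta_{k,\xi,z,T}\ln^{2d}(N)/\ln 2$ contribution that appears in the corollary's hypothesis). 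Adding these three terms, the factor-$1/2$ slack absorbs the entropy-deficit contribution, so condition~(\ref{simple:condition}) of Proposition~\ref{weakerProp} is satisfied for $\tau$, proving part~1.

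For Part~2 I would appeal to the canonical-typicality results of \cite{PSW06,GLTZ06}: for a Haar-random $|\psi\rangle$ on a subspace of dimension $d_R$ with projector $\Pi$, and any region $C$ of local dimension $d_C$,
\begin{equation*}
\mathbb{E}\,\bigl\|\tr_{\bar C}|\psi\rangle\!\langle\psi|-\tr_{\bar C}(\Pi/d_R)\bigr\|_1 \le d_C/\sqrt{d_R},\qquad \Pr[\text{deviation}>t]\le 2\me^{-d_R t^2/(18\pi^3)}.
\end{equation*}
Choosing $t=\eta=(18\pi^3/d_R)^{1/3}$ balances the Lévy tail against the mean deviation $d_C\eta^{3/2}/\sqrt{18\pi^3}$, yielding $\|\tau_C-(\tau_{e,\delta})_C\|_1 \le \eta + d_C\eta^{3/2}/\sqrt{18\pi^3}$ with probability at least $1-2\me^{-1/\eta}$. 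The stated form of $\eta$ requires a lower bound $|M_{e,\delta}|\ge\exp[Ns(T)-2\sqrt{Nc(T)}-2\sqrt{N}]$, which I would derive from Lemma~\ref{thmBerryEsseenThm}: the Gibbs weight of the window is at least of order $2\delta/(T\sqrt{2\pi c(T)})$, and multiplying by $Z(T)\exp((eN-\delta\sqrt{N})/T)$ together with the identity $Z(T)\exp(u(T)N/T)=\exp(Ns(T))$ and the hypotheses $|e-u(T)|\le \sqrt{c(T)T^2/N}$, $\delta\le\sqrt{c(T)T^2}$ yields the bound. A triangle inequality
\begin{equation*}
\|\tau_C-(\rho_T)_C\|_1 \le \|\tau_C-(\tau_{e,\delta})_C\|_1 + \|(\tau_{e,\delta})_C-(\rho_T)_C\|_1,
\end{equation*}
combined with Theorem~\ref{equivalenceensembles} and averaging linearly over $C\in\mathcal{C}_l$, completes the proof.

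The main obstacle is Part~2's dimension estimate: reproducing the stated form of $\eta$ leaves only $O(\sqrt{N})$ slack around $Ns(T)$ in the lower bound on $\log|M_{e,\delta}|$, so the Berry--Esseen error $\Delta_{k,\xi,z,T}$, the spectrum discreteness inside the window, and the $O(1/\sqrt{N})$ offset $|e-u(T)|$ all have to fit inside the factor $2\sqrt{c(T)}+2$ appearing in the exponent. Part~1, by contrast, is essentially algebraic once the relative-entropy identity has been expanded and the factor-$1/2$ slack exploited.
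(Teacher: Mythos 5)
Your Part~2 is essentially correct and matches the paper's route: one invokes the Popescu--Short--Winter/Goldstein--Lebowitz--Tumulka--Zangh\`i concentration bound for $\|\tau_C-(\tau_{e,\delta})_C\|_1$ with $d_R=|M_{e,\delta}|$, balances the L\'evy tail against the mean deviation at $\varepsilon=\eta=(18\pi^3/|M_{e,\delta}|)^{1/3}$, uses a Berry--Esseen lower bound $\ln|M_{e,\delta}|\ge Ns(T)-2(\sqrt{c(T)}+1)\sqrt{N}$, and then triangles through $(\tau_{e,\delta})_C$ using the already-established microcanonical case.

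Your Part~1, however, has a genuine gap, and the gap is precisely what Lemma~\ref{opinqualitymicromacro} is designed to repair. The identity
\begin{equation*}
S(\tau\|\rho_T)=S(\tau_{e,\delta}\|\rho_T)+\bigl(\log|M_{e,\delta}|-S(\tau)\bigr)+\frac{\tr\bigl(H(\tau-\tau_{e,\delta})\bigr)}{T\ln 2}
\end{equation*}
is correct, but the energy term can only be bounded crudely by $2\delta\sqrt{N}/(T\ln 2)\le 2\sqrt{c(T)N}/\ln 2$, which is $O(\sqrt{N})$; nothing in the corollary's hypothesis absorbs it, since the factor-$1/2$ slack is spent on the entropy deficit and the remaining budget is $O(N^{1/(d+1)})=o(\sqrt{N})$ for $d\ge 2$. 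Moreover, your parenthetical claim that expanding $S(\tau_{e,\delta}\|\rho_T)=\tr(H\tau_{e,\delta})/(T\ln 2)+\log Z(T)-\log|M_{e,\delta}|$ and lower-bounding $|M_{e,\delta}|$ via Lemma~\ref{thmBerryEsseenThm} ``produces the $56\sqrt{c(T)}\Delta_{k,\xi,z,T}\ln^{2d}(N)/\ln 2$ contribution'' is not correct either: the Berry--Esseen lower bound on $\log|M_{e,\delta}|$ only gives $Ns(T)-O(\sqrt{N})$, so this expansion again yields $S(\tau_{e,\delta}\|\rho_T)\lesssim O(\sqrt{N})$ rather than the required polylogarithmic bound. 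What the paper actually does (Lemma~\ref{opinqualitymicromacro}) is apply the bound directly to \emph{any} $\tau$ supported on $\mathrm{span}\{|\nu\rangle\}_{\nu\in M_{e,\delta}}$: it controls $-\tr(\tau\log\rho_T)$ by the uniform bound $E_\nu\le eN+\delta\sqrt{N}$ (valid for all $\nu\in M_{e,\delta}$, so independent of $\tr(\tau H)$), and crucially introduces a much narrower auxiliary window $M_{\tilde e,\tilde\delta}$ with $\tilde\delta=\delta_0=O(\Delta_{k,\xi,z,T}\ln^{2d}(N)\sqrt{c(T)T^2}/\sqrt{N})$ and $\tilde eN=eN+\delta\sqrt N-\tilde\delta\sqrt N$ so that the $\pm\delta\sqrt{N}$ slop cancels and only a $2\tilde\delta\sqrt{N}/T=O(\sqrt{c(T)}\Delta_{k,\xi,z,T}\ln^{2d}(N))$ contribution survives --- this is the origin of the $56\sqrt{c(T)}\Delta_{k,\xi,z,T}\ln^{2d}(N)$ term, and it is what replaces your uncontrolled $O(\sqrt{N})$ energy term. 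Applying Lemma~\ref{opinqualitymicromacro} to $\tau$ directly, together with Eq.~\eqref{simplify_bound} and the entropy hypothesis, is the route that makes condition~\eqref{simple:condition} check out.
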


The second part of the corollary is a direct consequence of the first part, the quantum Berry--Esseen bound in Lemma~\ref{thmBerryEsseenThm}, and the result of \cite{PSW06, GLTZ06}, which shows that a generic state in a energy subspace has the same local reductions as the microcanonical state.

\section{Comparison with Previous Work} 

The problem of equivalence of ensembles has been considered since the foundational work of Boltzmann and Gibbs. See \cite{Tou06} for a historical perspective. An intuitive explanation for the equivalence at non-critical temperatures is the following: Whenever there is a finite correlation length, the heat capacity, which determines the energy variance density, is of order $O(N)$, implying that the standard deviation of energy is of order $O(\sqrt{N})$. As energy is an extensive quantity (i.e. of order $O(N)$), the distribution of energies in the Gibbs state is highly concentrated around its mean value for a large number of spins, and so one might be tempted to believe that only the microcanonical subspace is relevant. However it turns out that this argument is too simplistic. Indeed it is easy to see that for any $\delta > 0$, $\tau_{e, \delta}$ and $\rho_T$ (with $e = u(T)$) are nearly orthogonal for sufficiently large $N$. Therefore any meaningful argument for the equivalence of ensembles must go beyond the distribution of energies and in some way restrict the kind of observables considered (for example, considering observables acting in small regions).

The most fruitful direction explored so far has been to consider systems in the thermodynamical limit. In this regime one can prove the equivalence of ensembles on the level of thermodynamical functions \cite{LL69, Lima1, Lima2, Tou09} (showing that the thermodynamical limits of the entropy density in the microcanonical ensemble is the Legendre transform of the limit of the free energy density). One can also show it on the level of states, as we do here, both for classical \cite{Geo95} and, only very recently, for quantum systems \cite{MAMW13}. However the price of considering the thermodynamical limit---instead of the physically relevant regime of very large but finite sizes---is that no finite bounds can be obtained on the size of the regions on which the canonical and microcanonical states are close. 

In this respect Theorem \ref{equivalenceensembles} goes beyond the earlier work in several aspects:

\begin{itemize}

\item  It covers the general case of non translation-invariant models.

\item It is based on the assumption of a finite correlation length, which is simpler and more physical than the assumption of a unique phase region employed in \cite{Lima1, Lima2, MAMW13}. 

\item  It gives explicit finite size bounds; for quite big regions of order $O(N^{1/(d+1)} )$ the two ensembles already look the same.

\item It shows that the equivalence holds true even for microcanonical states with very small energy spread, of order $O(\log^{2d}(N))$ and substantially smaller than the value $O(\sqrt{N})$ that could have been expected. 

\item It covers more general states than the microcanonical, showing that the important conditions are that the state has small free energy or that it is concentrated around a fixed energy and has sufficiently large entropy.

\item It shows that that any two microcanonical states $\tau_{e, \delta}$ and $\tau_{e', \delta'}$ are locally equivalent whenever $|e - e'|N \leq O(\sqrt{N})$ and $O(\log^{2d}(N)/\sqrt{N}) \leq \delta + \delta' \leq O(1)$ (assuming $\rho_{T}$ has a finite correlation length).

\end{itemize}

It is an interesting open question to determine how small $\delta$ can be taken. We note that the eigenstate thermalization hypothesis (ETH) states that even for $\delta = 0$, i.e. for a single eigenstate, one should already have the same local expectations values as the corresponding microcanonical state \cite{Sre94}. However, while believed to hold true for several systems, there are known counterexamples to ETH, e.g., systems with many-body localization. 

\section{Proof Outline}

Our proof can be seen as a finite-size version of previous results \cite{Lima1, Lima2, MAMW13}  relating the (micro)canonical ensembles in the thermodynamical limit, in particular the recent work of M\"uller, Adlam, Masanes, and Wiebe \cite{MAMW13}, who showed the equivalence on the level of states for quantum systems in the thermodynamical limit. There the authors obtained the result from two observations, which we now briefly explain. 

Given a sequence of of translation-invariant Hamiltonians $H_{\Lambda_n}$ acting on finite volumes $\Lambda_n$ (with $n$ spins) with a well-defined thermodynamical limit, we define the (Hermholtz) free energy density as
\begin{equation} \label{freeenrtysenditydef}
f(T) := \inf \{ f_T(\omega) : \omega \hspace{0.2 cm} \text{translation-invariant state} \},
\end{equation}
with $f_T(\omega) = u(\omega) - T s(\omega)$ the free energy density for a translation-invariant state $\omega$ in the infinite lattice limit. In \cite{MAMW13} one is interested in the so-called \textit{one phase region}, in which there is only one state $\omega$ achieving the minimum in Eq.~(\ref{freeenrtysenditydef}), given by the KMS state associated to the sequence of finite volume Gibbs states. This uniqueness condition holds true if the finite volume Gibbs states have a finite correlation length. The first observation of the authors of \cite{MAMW13} is that 
\begin{equation} \label{equalitylimits}
\lim_{n \rightarrow \infty} \frac{1}{n} F_{T} \left(\tau_{u(T), o(\sqrt{N})}^{\Lambda_n} \right) =  \lim_{n \rightarrow \infty} \frac{1}{n} F_{T} \left(\rho_{T}^{\Lambda_n} \right),
\end{equation}
i.e., the free-energy density of the microcanonical ensemble converges to the free-energy density of the canonical ensemble (this fact is attributed to \cite{Sim93}). The second observation is that because of the uniqueness assumption in Eq.~(\ref{freeenrtysenditydef}), $ \tau_{u(T), o(\sqrt{N})}^{\Lambda_n}$ and $ \rho_{T}^{\Lambda_n} $ converge to the same state and therefore for any fixed region $\Lambda$:
\begin{equation}
\lim_{n \rightarrow \infty} \left \Vert   \tr_{\Lambda_n \backslash \Lambda}(\tau_{u(T), o(\sqrt{N})}^{\Lambda_n}) - \tr_{\Lambda_n \backslash \Lambda}(\rho_{T}^{\Lambda_n} ) \right \Vert_1 = 0.
\end{equation}

The proof of Theorem \ref{equivalenceensembles} will have a similar structure to the argument above. Indeed Proposition \ref{weakerProp} shows that every state of small enough free energy has its local reduced density matrices equal to the ones of the canonical state (if the latter has a finite correlation length). This is a finite-size analogue of the uniqueness of the minimizer in Eq.~(\ref{freeenrtysenditydef}) (which, as we mentioned, can also be derived from the assumption of a finite correlation length). To prove Theorem  \ref{equivalenceensembles} we show in Lemma \ref{opinqualitymicromacro} that 
\begin{equation} \label{aux343}
S(\tau_{u(T), \delta} \| \rho_{T}) \leq O(\log^{2d}(N)). 
\end{equation}
This is a finite-size analogue of Eq.~(\ref{equalitylimits}) and follows from a version of the Berry--Esseen Theorem for quantum lattice systems that we prove in Ref \cite{CBG15} (see Lemma \ref{thmBerryEsseenThm}). 
Note that since the maximum value of $S(\tau \| \rho_{T})$ is $O(n)$ (as the reference state is a thermal state), the Eq.~(\ref{aux343}) already suggests that the two states are not very different. Theorem \ref{equivalenceensembles} then follows directly from Proposition \ref{weakerProp} and Eq.~(\ref{aux343}).

We now give a quick summary of the argument behind the proof of Proposition \ref{weakerProp}. The proof in earnest is given in Section \ref{proof:prop1}. We use four variants of the quantum relative entropy. The first is the quantum Kullback--Leibler divergence defined before by:
\begin{equation}
S(\tau \| \rho) = \tr(\tau (\log \tau - \log \rho)).
\end{equation}
We also use its smoothed version:
\begin{equation}
S^{\varepsilon}(\tau \| \rho) := \min_{\tilde \tau \in B_{\varepsilon}(\tau)} S(\tilde \tau \| \rho),
\end{equation}
with $B_{\varepsilon}(\tau) := \{ \tilde{\tau} : \Vert \tau - \tilde \tau \Vert_1 \leq \varepsilon  \}$ the set of states that are $\varepsilon$-close to $\tau$. We also consider the max-relative entropy of two states $\tau$ and $\rho$ \cite{Datta09}:
\begin{equation}
S_{\max}(\tau \| \rho ) := \{ \min \lambda : \tau \leq 2^{\lambda} \rho   \},
\end{equation}
and its smooth version
\begin{equation}
S^{\varepsilon}_{\max}(\tau \Vert \rho ) := \min_{\tilde{\tau} \in B_{\varepsilon}(\tau)} S_{\max}(\tilde\tau \Vert \rho ).
\end{equation}
The relative entropies are related as follows
\begin{equation} \label{boundrelent}
S^{2\sqrt{\varepsilon}}(\tau \Vert \rho) \leq S^{2\sqrt{\varepsilon}}_{\text{max}}(\tau \Vert \rho) \leq  \frac{S(\tau \Vert \rho)+1}{\varepsilon} + \log \left( \frac{1}{1 - \varepsilon}  \right).
\end{equation}
The second inequality is known as quantum substate theorem \cite{substate,substate_improved}. 

For simplicity in this proof sketch we consider the one-dimensional case $d=1$, $N=n$, leaving the general case to the actual proof. 
The set of all intervals of length $l$ is given by $\mathcal{C}_l=\{C_1,\dots,C_{n-l+1}\}$ with $C_i=\{i,i+1,\dots,i+l-1\}$. Thus
\begin{equation}
\mathop{\mathbb{E}}_{C \in {\cal C}_l}\|\tau_{C}-\rho_{C}\|_1=\frac{1}{n-l+1}\sum_{i=1}^{n-l+1}\|\tau_{C_i}-\rho_{C_i}\|_1.
\end{equation}
We may now group the sets $C_i$ such that the $C_i$ within each group are separated from each other by a distance of $r$. There are at most $l+r$ such groups and within each group are $m\sim \frac{n}{l+r}$ sets $C_i$. Let us now focus on one group and let $C_1,\dots,C_m$ the sets in this group. Eqs. (\ref{simple:condition}), (\ref{boundrelent}), and the monotonicity under partial trace give
\begin{equation} \label{boundsmax2}
S_{\max}^{2\sqrt{\varepsilon}}(\tau_{C_1 \ldots C_m} \Vert \rho_{C_1 \ldots C_m}) \lesssim \sqrt{\epsilon n}.
\end{equation}
Lemma~\ref{tracenormboundfromxi} shows that if $\rho$ has finite correlation length $\xi$ then 
\begin{equation}
\Vert \rho_{C_1 \ldots C_m} - \rho_{C_1} \otimes \cdots \otimes \rho_{C_m} \Vert_1 \lesssim m n^zD^{2l-r/\xi},
\end{equation}
which can be made arbitrarily small by increasing $r$.
By the data processing inequality  for the smooth max-relative entropy \cite{Datta09} (see Lemma~\ref{datarenner} in Section~\ref{auxLemmas})  and Eq.~(\ref{boundsmax2})
one then has
\begin{equation}
S_{\max}^{\varepsilon^\prime}(\tau_{C_1 \ldots C_m} \Vert \rho_{C_1} \otimes \cdots \otimes \rho_{C_m})\lesssim \sqrt{\epsilon n}.
\end{equation}
with $\varepsilon^\prime=2\sqrt{\varepsilon}+\sqrt{8m n^zD^{2l-r/\xi}}$.
From Eq.~(\ref{boundrelent}), we find that the above bound also holds for $S^{\varepsilon^\prime}(\tau_{C_1 \ldots C_m} \Vert \rho_{C_1} \otimes \cdots \otimes \rho_{C_m})$. Then by subadditivity of the von Neumann entropy  it follows that 
\begin{equation}
\sum_{i=1}^mS^{\varepsilon^\prime}(\tau_{C_i} \Vert \rho_{C_i})\lesssim \sqrt{\epsilon n}.
\end{equation}
Pinsker's inequality gives
\begin{equation}
 \sqrt{S^{\varepsilon'}( \tau_{C_i} \| \rho_{C_i})} \gtrsim \Vert \tau_{C_i} -  \rho_{C_i} \Vert_1 - \varepsilon'
\end{equation}
and hence, due to our choice of $l\lesssim \sqrt{\epsilon n}$ given by Eq. (\ref{simple:condition}) and since by construction $m \sim n / (l+r)$, we have
\begin{equation}
\begin{split}
\mathop{\mathbb{E}}_{C \in {\cal C}_l}\|\tau_{C}-\rho_{C}\|_1&\lesssim \varepsilon'+\frac{\sqrt{m}(l+r)}{n-l+1}(\epsilon n)^{1/4}
\lesssim \sqrt{\varepsilon}+\sqrt{n^{z+1} D^{2\sqrt{\epsilon n}-r/\xi}}+\sqrt{\epsilon+r\sqrt{\epsilon /n}},
\end{split}
\end{equation}
and the result then follows from setting $r\sim \sqrt{\epsilon n}$.

\vspace{0.3 cm}
\section{Proofs}

\subsection{Proof of Proposition~\ref{weakerProp}}
\label{proof:prop1}

Let $1\le l\le n$ and ${\cal C}_l$ the set of all cubes in $\Lambda$ with edge length $l$, i.e., 
\begin{equation}
{\cal C}_l=\left\{C_i\,\big|\,i\in\{1,n-l+1\}^d\right\},\;\;\; C_i=i+\{0,\dots,l-1\}^d.
\end{equation}
Writing $\Lambda_l=\{1,n-l+1\}^d$, we thus have
\begin{equation}
\begin{split}
\mathop{\mathbb{E}}_{C \in {\cal C}_l}\|\tau_{C}-\rho_{C}\|_1 &=\frac{1}{(n-l+1)^d}\sum_{i\in\Lambda_l}\|\tau_{C_i}-\rho_{C_i}\|_1.
\end{split}
\end{equation}
\begin{figure}[t]
\begin{center}
\includegraphics[width=0.9\textwidth]{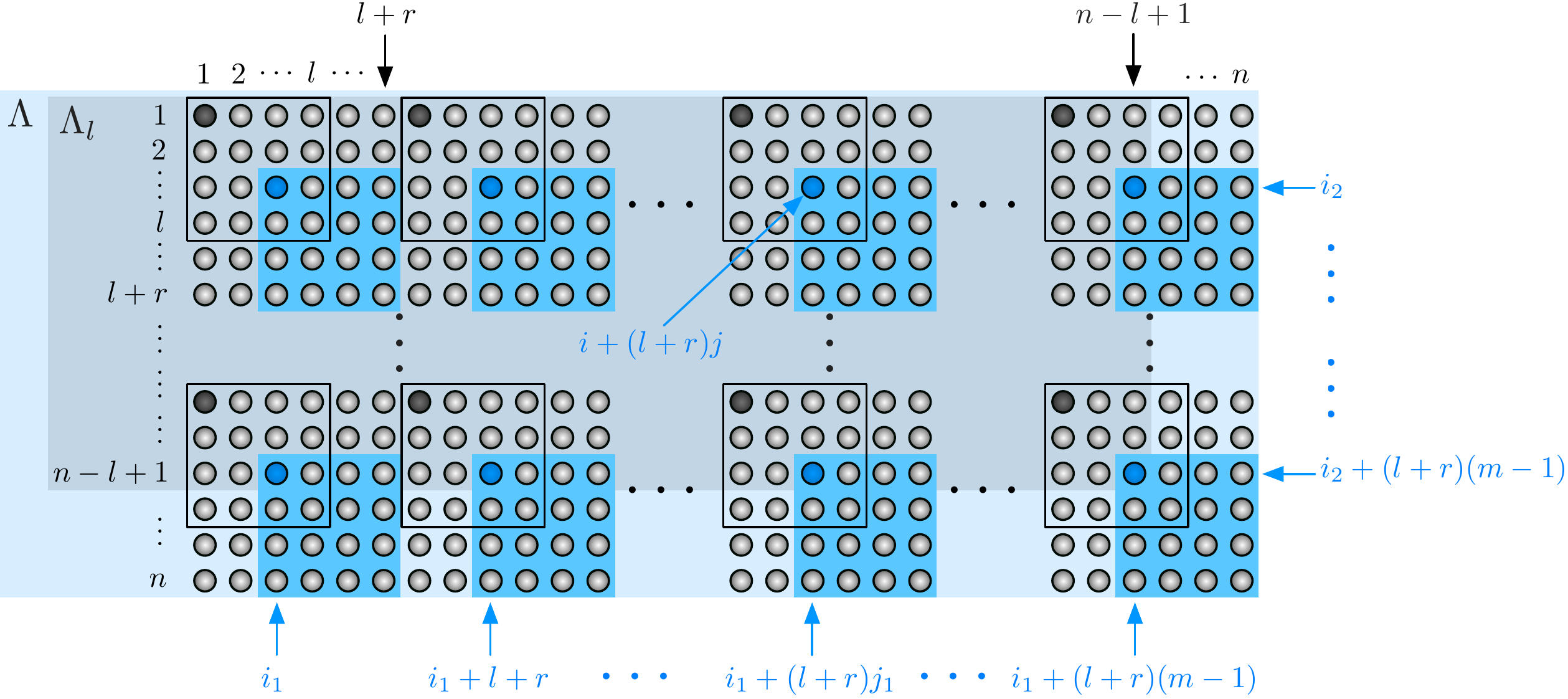}
\end{center}
\caption{\label{decomp} The lattice $\Lambda=\{1,\dots,n\}^d$ for $d=2$. The set  of all cubic subsets of edge length $l$ is $\mathcal{C}_l=\{C_i\subset\Lambda\,|\,i\in\Lambda_l\}$, where $C_i=i+\{0,\dots,l-1\}^d$ and $\Lambda_l=\{1,n-l+1\}^d$, which we decompose as $\Lambda_l=\bigcup_{i\in\{1,\dots,l+r\}^d}{\cal L}^{(r,l)}_{i}$.
Blue sites are the elements of ${\cal L}^{(r,l)}_{i}$ for $l=4$, $r=2$, and $i=(i_1,i_2)=(2,3)$. Blue squares are the corresponding cubic subsets of edge length $l$, $C_{i+(l+r)j}$, which are separated by $r$ sites. Black sites are the elements of ${\cal L}^{(r,l)}_{i}$ for $i=(i_1,i_2)=(1,1)$ and black squares indicate the corresponding $C_{i+(l+r)j}$. Here, $j\in\{0,\dots,m-1\}^d$ and $m=\lceil\frac{n-l+1}{l+r}\rceil$.}
\end{figure}%
We now decompose $\Lambda_l$ into a union of sets  ${\cal L}^{(r,l)}_{i}$ such that for every $j,j^\prime\in{\cal L}^{(r,l)}_{i}$, $j\ne j^\prime$, we have $\text{dist}(C_j,C_{j^\prime})> r$ as in Fig.~\ref{decomp}: Define $m:=\lceil \frac{n-l+1}{l+r}\rceil$ and
\begin{equation}
\begin{split}
{\cal L}^{(r,l)}_{i}&=
\bigcup_{j\in\{0,\dots,m-1\}^d}\{i+(l+r)j\}\cap\Lambda_l\\
&=\left\{i+(l+r)j\in\Lambda_l\,\big|\,j\in\{0,\dots,m-1\}^d\right\}
\end{split}
\end{equation}
such that
\begin{equation}
\label{how_many}
\begin{split}
\Lambda_l&=\bigcup_{i\in\{1,\dots,l+r\}^d}{\cal L}^{(r,l)}_{i},\;\;\;|\Lambda_l|=\sum_{i\in\{1,\dots,l+r\}^d}|{\cal L}^{(r,l)}_{i}|,
\end{split}
\end{equation}
and
\begin{equation}
\label{decomp_start}
\begin{split}
\mathop{\mathbb{E}}_{C \in {\cal C}_l}\|\tau_{C}-\rho_{C}\|_1 &=\frac{1}{|\Lambda_l|}\sum_{i\in\{1,\dots,l+r\}^d}\sum_{j\in\mathcal{L}^{(r,l)}_i}\|\tau_{C_j}-\rho_{C_j}\|_1.
\end{split}
\end{equation}
We now need the following lemma.

\begin{lemma} \label{starting_lemma} Let $0<\epsilon< 1$ and $C_1,\dots,C_M\subset \Lambda$ with $\text{dist}(C_i,C_j)>0$ for $i\ne j$.  Let $\tau,\rho$ states such that $\rho$ has $(\xi,z)$-exponentially decaying correlations and let
\begin{equation}
\kappa:=2^{\frac{S(\tau\|\rho)+1}{\epsilon}+\log\left(\frac{1}{1-\epsilon}\right)}
\sum_{j=2}^MD^{2|C_j|}N^z\me^{-\text{dist}(C_1\cup\cdots\cup C_{j-1},C_j)/\xi}<1.
\end{equation}
Then there is a state $\pi_{C_1\cdots C_M}$ such that 
\begin{equation}
S_{\max}(\pi_{C_1\cdots C_M} \| \rho_{C_1}\otimes\cdots\otimes\rho_{C_m}) \leq \frac{S(\tau\|\rho)+1}{\epsilon}+\log\left(\frac{1}{1-\epsilon}\right)+\log\left(\frac{1}{1-\kappa}\right)
\end{equation}
and $\|\pi_{C_1\cdots C_M}-\tau_{C_1\cdots C_M}\|_1\le 2\sqrt{\epsilon}+\sqrt{8\kappa}$.
\end{lemma}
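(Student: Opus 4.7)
The plan is to decouple the blocks $C_1,\ldots,C_M$ against a product reference state in two movements: first localise $\tau$ versus $\rho$ on $C_1\cup\cdots\cup C_M$ via the quantum substate theorem, then replace the joint reference $\rho_{C_1\cdots C_M}$ by the product $\sigma:=\rho_{C_1}\otimes\cdots\otimes\rho_{C_M}$ using the assumed finite correlation length of $\rho$.

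First, I would reduce to the subsystem $C_1\cup\cdots\cup C_M$: by monotonicity of the quantum relative entropy under partial trace, $S(\tau_{C_1\cdots C_M}\|\rho_{C_1\cdots C_M})\le S(\tau\|\rho)$. Applying the quantum substate theorem (the second inequality in Eq.~(\ref{boundrelent})) with smoothing parameter $2\sqrt{\epsilon}$ then produces a state $\tilde\tau$ on $C_1\cup\cdots\cup C_M$ with
\[
\|\tilde\tau-\tau_{C_1\cdots C_M}\|_1\le 2\sqrt{\epsilon},\qquad \tilde\tau\le 2^{\lambda}\,\rho_{C_1\cdots C_M},\qquad \lambda:=\tfrac{S(\tau\|\rho)+1}{\epsilon}+\log\tfrac{1}{1-\epsilon}.
\]

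Next, I would peel off the product structure of $\sigma$ by a telescoping argument: writing $\rho_{C_1\cdots C_M}-\sigma$ as a sum of $M-1$ differences gives
\[
\|\rho_{C_1\cdots C_M}-\sigma\|_1\le\sum_{j=2}^{M}\|\rho_{C_1\cdots C_j}-\rho_{C_1\cdots C_{j-1}}\otimes\rho_{C_j}\|_1.
\]
Each summand is a two-region correlation of $\rho$ between $C_1\cup\cdots\cup C_{j-1}$ and $C_j$. Converting from the operator-norm normalisation in the definition of $\text{cor}_\rho$ to trace norm (through the anticipated Lemma~\ref{tracenormboundfromxi}, which is responsible for the dimensional factor $D^{2|C_j|}$) and invoking the $(\xi,z)$-exponential decay bounds each term by $D^{2|C_j|}N^{z}\me^{-\text{dist}(C_1\cup\cdots\cup C_{j-1},C_j)/\xi}$. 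Multiplying the resulting sum by $2^{\lambda}$ gives exactly $\kappa$.

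Finally, I would transfer the operator domination from $\rho_{C_1\cdots C_M}$ to $\sigma$ by truncating $\tilde\tau$. Set $X:=(\tilde\tau-2^{\lambda}\sigma)_{+}$: the elementary operator inequality $(A-B)_{+}\le A_{+}+B_{-}$ applied to $A=\tilde\tau$ and $B=2^{\lambda}\sigma$ yields $X\le \tilde\tau$, so $\tilde\tau-X\ge 0$, and by definition $\tilde\tau-X\le 2^{\lambda}\sigma$. The bound $\tilde\tau-2^{\lambda}\sigma\le 2^{\lambda}(\rho_{C_1\cdots C_M}-\sigma)$, combined with monotonicity of the positive part under the operator order, gives $\tr(X)\le\tfrac{1}{2}\cdot 2^{\lambda}\|\rho_{C_1\cdots C_M}-\sigma\|_1\le\kappa/2$. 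Setting $\pi_{C_1\cdots C_M}:=(\tilde\tau-X)/\tr(\tilde\tau-X)$ then gives $S_{\max}(\pi_{C_1\cdots C_M}\|\sigma)\le \lambda+\log(1/(1-\kappa))$, while a short computation yields $\|\pi_{C_1\cdots C_M}-\tilde\tau\|_1\le\sqrt{8\kappa}$; combining with Step 1 via the triangle inequality delivers the claimed trace-norm estimate $2\sqrt{\epsilon}+\sqrt{8\kappa}$.

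The hard part will be the third step: one must keep control of the positive part $(\tilde\tau-2^{\lambda}\sigma)_{+}$ despite $\tilde\tau$ and $\sigma$ not commuting, and verify both that the truncation produces a sub-normalised positive operator whose re-normalisation stays sufficiently close to $\tilde\tau$, and that the residual $\log(1/(1-\kappa))$ cost in the max-relative entropy survives re-normalisation. The first two steps amount to standard applications of the substate theorem and of local-to-product decoupling via a finite correlation length.
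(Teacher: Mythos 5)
Your Steps 1 and 2 coincide with the paper's proof: the substate theorem plus monotonicity of the relative entropy under partial trace gives $\tilde\tau\le 2^{\lambda}\rho_{C_1\cdots C_M}$ with $\|\tilde\tau-\tau_{C_1\cdots C_M}\|_1\le 2\sqrt{\epsilon}$, and the telescoping bound via Lemma~\ref{tracenormboundfromxi} and the $(\xi,z)$-decay controls $\|\rho_{C_1\cdots C_M}-\sigma\|_1$ so that $2^{\lambda}\|\rho_{C_1\cdots C_M}-\sigma\|_1\le\kappa$. These parts are correct.

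The gap is in Step 3, and it is exactly the step you flagged as hard. The operator inequality $(A-B)_{+}\le A_{+}+B_{-}$ that you use to conclude $X\le\tilde\tau$ is false for non-commuting operators. Take $A=\mathrm{diag}(1,0)$ and $B$ the rank-one projector onto $(1,1)/\sqrt{2}$, so $A_{+}+B_{-}=A$; the positive eigenvector of $A-B$ is proportional to $(1,1-\sqrt{2})$, which is not in the range of $A$, so $(A-B)_{+}\not\le A$. (What is true is only the trace version $\tr[(A-B)_{+}]\le\tr[A_{+}]+\tr[B_{-}]$, which is what you legitimately use later to bound $\tr(X)\le\kappa/2$.) Consequently $\tilde\tau-X$ need not be positive semidefinite, your candidate $\pi_{C_1\cdots C_M}$ need not be a state, and the max-relative-entropy bound does not follow. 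This failure of naive truncation is precisely why the paper instead invokes Lemma~\ref{datarenner} (Datta--Renner, Lemma C.5 of \cite{BP09}): there one sets $Y=2^{\lambda}\sigma$, $\Delta=2^{\lambda}|\rho_{C_1\cdots C_M}-\sigma|$, and conjugates, $\pi=T\tilde\tau T^{\dagger}/\tr[T^{\dagger}T\tilde\tau]$ with $T=Y^{1/2}(Y+\Delta)^{-1/2}$, which manifestly preserves positivity, gives $\pi\le 2^{\lambda}\sigma/(1-\kappa)$, and yields $\|\pi-\tilde\tau\|_1\le\sqrt{8\kappa}$ via a purification/fidelity argument (this is where the square root comes from; your subtraction heuristic would have suggested a bound of order $\kappa$, another sign that the mechanism is different). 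To complete your proof you must replace the truncation by this conjugation argument or an equivalent one.
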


\begin{proof} 
The quantum substate theorem \cite{substate}
in the version of Ref.~\cite{substate_improved} implies the existence of a state $\tilde{\pi}_{C_1\cdots C_M}$ such that 
\begin{equation}
\label{having_pi}
\|\tilde{\pi}_{C_1\cdots C_M}-\tau_{C_1\cdots C_M}\|_1\le 2\sqrt{\epsilon}
\end{equation}
and
 \begin{equation}
 S_{\text{max}}(\tilde{\pi}_{C_1\cdots C_M}\|\rho_{C_1\cdots C_M})\le \frac{S(\tau_{C_1\cdots C_M}\|\rho_{C_1\cdots C_M})+1}{\epsilon}+\log\left(\frac{1}{1-\epsilon}\right),
 \end{equation} 
 such that, using the monotonicity under partial trace of the quantum relative entropy,
  \begin{equation}
 S_{\text{max}}(\tilde{\pi}_{C_1\cdots C_M}\|\rho_{C_1\cdots C_M})\le \frac{S(\tau\|\rho)+1}{\epsilon}+\log\left(\frac{1}{1-\epsilon}\right)=:\lambda.
 \end{equation}  
  This proves the statement for $M=1$. For $M>1$, we use Lemma~\ref{tracenormboundfromxi} to find
\begin{equation}
\|\rho_{C_1}\otimes\cdots\otimes\rho_{C_M}-\rho_{C_1\cdots C_M}\|_1\le \sum_{j=2}^MD^{2|C_j|}\text{cor}_{\rho}(C_1\cdots C_{j-1},C_j)=:c.
\end{equation}
By Lemma~\ref{datarenner}, if $\kappa:=2^\lambda c<1$ then there is a state $\pi_{C_1\cdots C_M}$ such that
 \begin{equation}
\label{eating_it_too}
S_{\max}(\pi_{C_1\cdots C_M} \| \rho_{C_1}\otimes\cdots\otimes\rho_{C_M}) \leq \lambda+\log\left(\frac{1}{1-\kappa}\right)
\end{equation}
and $\|\pi_{C_1\cdots C_M}-\tilde{\pi}_{C_1\cdots C_M}\|_1\le \sqrt{8\kappa}$, which, in combination with Eqs.~(\ref{having_pi}) and (\ref{eating_it_too}), proves the statement for $M>1$ as by the triangle inequality $\|\pi_{C_1\cdots C_M}-\tau_{C_1\cdots C_M}\|_1\le 2\sqrt{\epsilon}+\sqrt{8\kappa} $.
\end{proof}

We are now in the position to prove the following stronger version of Proposition~\ref{weakerProp}.

\begin{prop} \label{strongerProp} Let $\epsilon>0$, the states $\rho$, $\tau$, and $l\in\nn$, $1\le l\le \frac{n+1}{2}$, such that $\rho$ has $(\xi,z)$-exponentially decaying correlations and such that 
\begin{equation}
\label{thm:condition}
\left\lceil W\!\left((2^d-1)^{1/d}\frac{n-l+1}{ \epsilon^{1/d}\xi d}
2^{ \frac{S(\tau\|\rho)+3/2}{\epsilon d}} D^{2l^d/d}n^z\me^{\frac{l-1}{\xi d}}\right)\xi d\right\rceil^d
\frac{S(\tau\|\rho)+2}{\epsilon}\le \epsilon (n-l+1)^d.
\end{equation}
Then
\begin{equation}
\label{thm:result}
\mathop{\mathbb{E}}_{C \in {\cal C}_l}\|\tau_{C}-\rho_{C}\|_1 \le \left(\sqrt{2}+2+\sqrt{\ln(2)}\right)\sqrt{2\epsilon}.
\end{equation}
Here, $\lceil\cdot\rceil$ denotes the smallest integer not less than $\cdot$ and $W$ the solution to $z=W(z)\me^{W(z)}$, $W(z)\ge 0$ (one of the real branches of the Lambert $W$ function).
\end{prop}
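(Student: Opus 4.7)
The plan is to execute the $d$-dimensional version of the sketch that precedes Proposition~\ref{strongerProp}, carefully tracking constants so that the condition~(\ref{thm:condition}) emerges naturally from the optimization of the separation parameter~$r$. First, I would start from the decomposition in Eq.~(\ref{decomp_start}), treating $r\in\nn$ as a free parameter. For each group $\mathcal{L}^{(r,l)}_i=\{C_{j_1},\dots,C_{j_M}\}$, whose cubes are pairwise disjoint and separated in the Manhattan metric by at least $r+1$, I would apply Lemma~\ref{starting_lemma} with $\tau$ and $\rho$ restricted to $C_{j_1}\cup\cdots\cup C_{j_M}$. Writing $\lambda:=\frac{S(\tau\|\rho)+1}{\epsilon}+\log\!\frac{1}{1-\epsilon}$, the lemma produces a state $\pi$ with $\|\pi-\tau_{C_{j_1}\cdots C_{j_M}}\|_1\le 2\sqrt{\epsilon}+\sqrt{8\kappa}$ and $S_{\max}(\pi\|\rho_{C_{j_1}}\otimes\cdots\otimes\rho_{C_{j_M}})\le \lambda+\log\!\frac{1}{1-\kappa}$, where the geometry of the packing gives
\begin{equation}
\kappa\le 2^{\lambda}\, M\, D^{2l^d}\, n^{dz}\,\me^{-(r+1)/\xi}, \qquad M\le\left\lceil\tfrac{n-l+1}{l+r}\right\rceil^{\!d}.
\end{equation}

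Second, I would descend to the individual marginals. From $S\le S_{\max}$ together with the superadditivity of the relative entropy against a product reference, one has $\sum_{b=1}^M S(\pi_{C_{j_b}}\|\rho_{C_{j_b}})\le \lambda+\log\!\frac{1}{1-\kappa}$. Pinsker's inequality on each summand plus Cauchy--Schwarz in $b$ then yields $\sum_b\|\pi_{C_{j_b}}-\rho_{C_{j_b}}\|_1\le\sqrt{2M\ln(2)\,(\lambda+\log\!\frac{1}{1-\kappa})}$. The triangle inequality and the monotonicity bound $\|\tau_{C_{j_b}}-\pi_{C_{j_b}}\|_1\le\|\tau-\pi\|_1$ close off the remaining piece. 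Summing over the $(l+r)^d$ groups, using $\sum_i|\mathcal{L}^{(r,l)}_i|=(n-l+1)^d=|\Lambda_l|$ for the first contribution and Cauchy--Schwarz $\sum_i\sqrt{|\mathcal{L}^{(r,l)}_i|}\le\sqrt{(l+r)^d(n-l+1)^d}$ for the Pinsker term, and then dividing by $|\Lambda_l|$, I would arrive at
\begin{equation}
\mathop{\mathbb{E}}_{C\in\mathcal{C}_l}\|\tau_C-\rho_C\|_1\le 2\sqrt{\epsilon}+\sqrt{8\kappa}+\sqrt{\frac{2\ln(2)\,(l+r)^d\,(\lambda+\log\!\frac{1}{1-\kappa})}{(n-l+1)^d}}.
\end{equation}

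Finally, I would pick $r$ so that $\kappa\le\epsilon$ and so that the last term is at most $\sqrt{2\epsilon\ln 2}$; the three contributions then become $\sqrt{2}\sqrt{2\epsilon}$, at most $2\sqrt{2\epsilon}$, and at most $\sqrt{\ln 2}\sqrt{2\epsilon}$, summing to the advertised bound. The condition $\kappa\le\epsilon$, after substituting the upper bound on $M$ and taking logarithms, has the shape $r/(\xi d)+\ln(r/(\xi d))\ge \text{(explicit expression in }\lambda,l,D,n,z,\epsilon,d\text{)}$, equivalently $\bigl(r/(\xi d)\bigr)\exp\!\bigl(r/(\xi d)\bigr)\ge y$ for an explicit~$y$; the smallest integer solution is $r=\lceil \xi d\,W(y)\rceil$, which is precisely the form seen in~(\ref{thm:condition}). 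A short calculation then identifies the remaining constants: $(2^d-1)^{1/d}$ arises from comparing $(l+r)^d$ to the $r^d$ left by the $d$-fold geometric packing when separating the Pinsker requirement $(l+r)^d\lambda\le\epsilon^2(n-l+1)^d$ from the $\kappa\le\epsilon$ requirement, the factor $\me^{(l-1)/(\xi d)}$ arises from the shift between $r+1$ and $l+r$ in the exponential, and $(S(\tau\|\rho)+2)/\epsilon$ serves as a clean upper bound on $\lambda+\log\!\frac{1}{1-\kappa}$ valid for $\epsilon,\kappa\le\tfrac12$. The principal obstacle is precisely this constant bookkeeping: one must verify that the single scalar inequality~(\ref{thm:condition}) is strong enough to simultaneously force $\kappa\le\epsilon$ and the Pinsker bound on the last displayed term, with no slack left over so that the clean coefficient $(\sqrt{2}+2+\sqrt{\ln 2})\sqrt{2\epsilon}$ is attained.
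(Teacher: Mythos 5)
Your proposal reproduces the paper's own proof: decompose $\Lambda_l$ into $(l+r)^d$ groups of cubes pairwise separated by $r$, apply Lemma~\ref{starting_lemma} to each group, then chain $S\le S_{\max}$, superadditivity against a product reference, Pinsker's inequality and Cauchy--Schwarz, and finally optimize $r$ via the Lambert $W$ function, arriving at exactly the paper's intermediate bound $(2+\sqrt{8})\sqrt{\epsilon}+\sqrt{\ln(4)\,(l+r)^d(S(\tau\|\rho)+2)/(\epsilon(n-l+1)^d)}$ and the constant $(\sqrt{2}+2+\sqrt{\ln 2})\sqrt{2\epsilon}$. Your gloss on the origin of $(2^d-1)^{1/d}$ is slightly off (it really comes from bounding $m^d-1\le(2^d-1)\bigl(\tfrac{n-l+1}{l+r}\bigr)^d$ via $\lceil x\rceil\le 2x$ for $x\ge1$), but the structure and all essential steps match the paper.
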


\begin{proof}
We set out to combine Lemma~\ref{starting_lemma} with the following basic properties of the quantum relative entropy and trace norm.
\begin{itemize}
\item[{[a]}] (Pinsker's inequality) $\Vert \rho - \sigma \Vert_1^2\leq  \ln(4)S(\rho \| \sigma)  $,

\item[{[b]}]  (Relation with $S_{\max}$ \cite{Datta09}) $S(\rho \| \sigma) \leq S_{\max}(\rho \| \sigma)$,

\item[{[c]}]  (Super-additivity\footnote{
This is an easy consequence of subadditivity of entropy. Indeed,
$S(\pi_{A_1 \cdots A_M} \| \rho_{A_1} \otimes \cdots \otimes \rho_{A_M}) = - S(\pi_{A_1 \cdots A_M}) - \tr( \pi_{A_1 \cdots A_M} \log(\rho_{A_1} \otimes \cdots \otimes \rho_{A_M}))  \geq - \sum_{j=1}^M S(\pi_{A_j}) - \tr( \pi_{A_1 \cdots A_M} \log(\rho_{A_1} \otimes \cdots \otimes \rho_{A_M})) = \sum_{j=1}^M S(\pi_{A_j} \| \rho_{A_j})$.
}) $\sum_{j=1}^M S(\pi_{A_j} \| \rho_{A_j})\leq S(\pi_{A_1 \cdots A_M} \| \rho_{A_1} \otimes \cdots \otimes \rho_{A_M})$,
\item[{[d]}]  (Monotonicity under partial trace) $\|\rho_A-\sigma_A\|_1\le \|\rho-\sigma\|_1$ for all $A\subset\Lambda$.
\end{itemize}
To this end, let $0<\epsilon\le 1/2$ and 
\begin{equation}
\label{kappa}
\kappa:=2^{ \frac{S(\tau\|\rho)+3/2}{\epsilon}}(m^d-1) D^{2l^d}N^z\me^{-(r+1)/\xi}\le \epsilon.
\end{equation}
Then, by Lemma~\ref{starting_lemma}, we have that for each $i\in\Lambda_l$ there is a state $\pi_{C_1\cdots C_{M_i}}$, $M_i=|\mathcal{L}^{(\delta,l)}_i|$, such that (we use   that $\log(\frac{1}{1-\epsilon})\le \frac{1}{2\epsilon}$ for $0<\epsilon\le 1/2$)
\begin{equation}
\label{Smax}
\begin{split}
S_{\max}(\pi_{C_1\cdots C_{M_i}} \| \rho_{C_1}\otimes\cdots\otimes\rho_{C_{M_i}}) \le \frac{S(\tau\|\rho)+2}{\epsilon}
\end{split}
\end{equation}
 and
 \begin{equation}
 \label{1-norm}
\|\pi_{C_1\cdots C_{M_i}}-\tau_{C_1\cdots C_{M_i}}\|_1\le 2\sqrt{\epsilon}+\sqrt{8\epsilon}.
\end{equation}
Then, starting with Eq.~\eqref{decomp_start},
\begin{equation}
\label{main_thing}
\begin{split}
\mathop{\mathbb{E}}_{C \in {\cal C}_l}\|\tau_{C}-\rho_{C}\|_1 &\le\frac{1}{|\Lambda_l|}\sum_{i\in\{1,\dots,l+r\}^d}\sum_{j\in\mathcal{L}^{(r,l)}_i}\|\tau_{C_j}-\pi_{C_j}\|_1\\
&\hspace{2cm}+\frac{1}{|\Lambda_l|}\sum_{i\in\{1,\dots,l+r\}^d}|\mathcal{L}^{(r,l)}_i|^{1/2}\sqrt{\sum_{j\in\mathcal{L}^{(r,l)}_i}\|\pi_{C_j}-\rho_{C_j}\|^2_1}\\
&\underset{\text{a,d}}{\le}\frac{1}{|\Lambda_l|}\sum_{i\in\{1,\dots,l+r\}^d}\sum_{j\in\mathcal{L}^{(r,l)}_i}\|\tau_{C_1\cdots C_{M_i}}-\pi_{C_1\cdots C_{M_i}}\|_1\\
&\hspace{2cm}+\frac{1}{|\Lambda_l|}\sum_{i\in\{1,\dots,l+r\}^d}|\mathcal{L}^{(r,l)}_i|^{1/2}\sqrt{\ln(4)\sum_{j\in\mathcal{L}^{(r,l)}_i}S(\pi_{C_j}\|\rho_{C_j})}\\
&\underset{\text{b,c,\eqref{how_many},\eqref{1-norm}}}{\le}(2+\sqrt{8})\sqrt{\epsilon}\\
&\hspace{1cm}
+\frac{1}{|\Lambda_l|}\sum_{i\in\{1,\dots,l+r\}^d}|\mathcal{L}^{(r,l)}_i|^{1/2}\sqrt{\ln(4)S_{\text{max}}(\pi_{C_1\cdots C_{M_i}}\|\rho_1\otimes\cdots\otimes\rho_{M_i})}\\
&\underset{\text{\eqref{Smax}}}{\le}(2+\sqrt{8})\sqrt{\epsilon}
+\frac{1}{|\Lambda_l|}\sum_{i\in\{1,\dots,l+r\}^d}|\mathcal{L}^{(r,l)}_i|^{1/2}\sqrt{\ln(4)\frac{S(\tau\|\rho)+2}{\epsilon}}\\
&\underset{\text{\eqref{how_many}}}{\le} (2+\sqrt{8})\sqrt{\epsilon}
+\sqrt{\ln(4)\frac{(l+r)^{d}}{(n-l+1)^d}\frac{S(\tau\|\rho)+2}{\epsilon}},
\end{split}
\end{equation}
where we used the (triangle and) Cauchy--Schwarz inequality to obtain the last (first) line. Now, for $m=1$ we have $\kappa=0$ and for $m\ge 2$ we have
$\frac{n-l+1}{l+r}\ge 1$ such that $\kappa\le \epsilon$  is implied by (see Eq.~\eqref{kappa})
\begin{equation}
\label{kappa2}
2^{ \frac{S(\tau\|\rho)+3/2}{\epsilon d}}\frac{n-l+1}{\epsilon^{1/d}\xi d}(2^d-1)^{1/d} D^{2l^d/d}n^z\me^{\frac{l-1}{\xi d}}\le \frac{l+r}{\xi d}\me^{\frac{l+r}{\xi d}},
\end{equation}
which also ensures that $r\ge 0$ for integer $r$.\footnote{The left hand side of Eq.~\eqref{kappa2} is lower bounded by $(n-l+1) \me^{\frac{l-1}{\xi d}}/(\xi d)$ such that for $n+1\ge 2l$ we have $(l+r) \me^{\frac{l+r}{\xi d}}/(\xi d)\ge l \me^{\frac{l-1}{\xi d}}/(\xi d)>(l-1) \me^{\frac{l-1}{\xi d}}/(\xi d)$.}
Hence, setting 
\begin{equation}
r=-l+\left\lceil\xi d W\left(
2^{ \frac{S(\tau\|\rho)+3/2}{\epsilon d}}\frac{n-l+1}{ \epsilon^{1/d}\xi d}(2^d-1)^{1/d} D^{2\frac{l^d}{d}}n^z\me^{\frac{l-1}{\xi d}}\right)\right\rceil,
\end{equation}
we have that Eq.~\eqref{thm:condition} implies Eq.~\eqref{thm:result}, which trivially also holds for $\epsilon>1/2$ as $\|\tau_C-\rho_C\|_1\le 2$ for all states $\tau$, $\rho$.
\end{proof}

Finally, using the bound $W(z)\le \ln(z+1)$ we find
that  Eq.~\eqref{simple:condition} implies Eq.~\eqref{thm:condition}, which proves Proposition~\ref{weakerProp}.

\subsection{Proof of Theorem \ref{equivalenceensembles}}  \label{proofmain}
We proof the following stronger version of Theorem~\ref{equivalenceensembles}.

\begin{thm} \label{equivalenceensembles_stronger} Let the canonical state $\rho_T$ (corresponding to a $k$-local Hamiltonian as in Eq.~\eqref{ham}) with energy density $u(T)$ and specific heat capacity $c(T)$ have $(\xi,z)$-exponentially decaying correlations. Let
the microcanonical state $\tau_{e,\delta}$ have mean energy such that
\begin{equation}
 |e-u(T)|\le  \sqrt{c(T)T^2/N}
 \end{equation}
  and energy spread such that
\begin{equation}
28\Delta_{k,\xi,z,T}\sqrt{c(T)T^2}\frac{\ln^{2d}(N)}{\sqrt{N}}\le \delta\le \sqrt{c(T)T^2}.
\end{equation}
Let $\epsilon>0$ and write
\begin{equation}
s=\frac{1}{\epsilon}\log\left(\frac{\sqrt{N}}{\Delta_{k,\xi,z,T}\ln^{2d}(N)}\me^{56\sqrt{c(T)}\Delta_{k,\xi,z,T}\ln^{2d}(N)}\right)+\frac{2}{\epsilon}.
\end{equation}
If $l\in\nn$, $1\le l\le \frac{n+1}{2}$, and $s$ are such that
\begin{equation}
\left\lceil W\!\left((2^d-1)^{1/d}\frac{n-l+1}{ \epsilon^{1/d}\xi d}
2^{ s/d} D^{2l^d/d}n^z\me^{\frac{l-1}{\xi d}}\right)\xi d\right\rceil^d
s\le \epsilon (n-l+1)^d
\end{equation}
then
\begin{equation}
\mathop{\mathbb{E}}_{C \in {\cal C}_l}\|\tau_{C}-\rho_{C}\|_1 \le \left(\sqrt{2}+2+\sqrt{\ln(2)}\right)\sqrt{2\epsilon}.
\end{equation}
Here, $\lceil\cdot\rceil$ denotes the smallest integer not less than $\cdot$ and $W$ the solution to $z=W(z)\me^{W(z)}$, $W(z)\ge 0$ (one of the real branches of the Lambert $W$ function).
\end{thm}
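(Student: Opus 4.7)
The plan is to reduce Theorem~\ref{equivalenceensembles_stronger} to Proposition~\ref{strongerProp} by supplying a good upper bound on the quantum relative entropy $S(\tau_{e,\delta}\|\rho_T)$. The parameter $s$ in the statement is constructed precisely to play the role of $(S(\tau\|\rho)+2)/\epsilon$ that enters the hypothesis~\eqref{thm:condition} of Proposition~\ref{strongerProp}. So the whole task is to show
\begin{equation}
\frac{S(\tau_{e,\delta}\|\rho_T)+2}{\epsilon}\le s,
\end{equation}
after which the hypothesis of the theorem becomes the hypothesis of the proposition (with $\tau=\tau_{e,\delta}$ and $\rho=\rho_T$) and the conclusion follows by direct invocation.

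First I would bound the relative entropy. Since $S(\tau_{e,\delta})=\ln|M_{e,\delta}|$ and $\rho_T=\me^{-H/T}/Z(T)$, a short computation gives
\begin{equation}
S(\tau_{e,\delta}\|\rho_T)= -\ln p_T(M_{e,\delta}) + S_{\text{cl}}\bigl(\text{uniform on }M_{e,\delta}\,\|\,\text{Gibbs restricted to }M_{e,\delta}\bigr),
\end{equation}
where $p_T(M_{e,\delta})=\sum_{\nu\in M_{e,\delta}}\me^{-E_\nu/T}/Z(T)$ is the probability the Gibbs distribution assigns to the microcanonical window. The second (classical) relative entropy is bounded by $2\delta\sqrt{N}/T$, since on the window the Gibbs weights vary by at most the factor $\me^{2\delta\sqrt{N}/T}$. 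The first term is controlled by the quantum Berry--Esseen bound in Lemma~\ref{thmBerryEsseenThm}: the spectral distribution of $H$ under $\rho_T$ is close, within $O(\Delta_{k,\xi,z,T}\ln^{2d}(N)/\sqrt{N})$ in Kolmogorov distance, to a Gaussian with mean $u(T)N$ and variance $c(T)T^2 N$. The hypothesis $|e-u(T)|\le\sqrt{c(T)T^2/N}$ places the window at most one Gaussian standard deviation from the centre, while the lower bound on $\delta$ in Eq.~\eqref{eq2} makes the window wide enough that the Berry--Esseen error does not swallow the Gaussian mass, yielding $p_T(M_{e,\delta})\gtrsim \delta/\sqrt{c(T)T^2}$ and hence $-\ln p_T(M_{e,\delta})\le \ln(\sqrt{c(T)T^2}/\delta)+O(1)$. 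Combining the two contributions and using the lower bound on $\delta$ once more to translate $\ln(\sqrt{c(T)T^2}/\delta)$ into $\ln(\sqrt{N}/(\Delta_{k,\xi,z,T}\ln^{2d}(N)))+O(1)$ recovers exactly the definition of $s$. This step is essentially the content of the referenced Lemma~\ref{opinqualitymicromacro} and is the finite-size analogue of Eq.~\eqref{equalitylimits} in the proof outline.

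The main obstacle is the quantum Berry--Esseen bound itself, whose proof is deferred to the companion work~\cite{CBG15}. Its essential input is that $H=\sum_i H_i$ is a sum of local terms and that the Gibbs state has a finite correlation length, which together make the summands approximately independent and allow a quantum analogue of the classical Berry--Esseen theorem for weakly dependent random variables. Once that bound is in hand, the passage to the relative entropy estimate above and then to the hypothesis of Proposition~\ref{strongerProp} is mostly bookkeeping, but some care is required to match the precise constants that define $s$ in the statement (in particular the numerical factor $56$, which tracks the interplay between the two contributions to $S(\tau_{e,\delta}\|\rho_T)$ as $\delta$ ranges over Eq.~\eqref{eq2}).
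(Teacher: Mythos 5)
Your proposal is correct and follows essentially the same route as the paper: Theorem~\ref{equivalenceensembles_stronger} is obtained by combining Proposition~\ref{strongerProp} with the relative-entropy bound of Lemma~\ref{opinqualitymicromacro}, which is proved exactly as you sketch via the quantum Berry--Esseen bound of Lemma~\ref{thmBerryEsseenThm} (the paper's only technical difference is that it lower-bounds the Gibbs weight of a slightly narrower auxiliary window $M_{\tilde e,\delta_0}\subseteq M_{e,\delta}$ rather than of $M_{e,\delta}$ itself, which is equivalent bookkeeping). Your identification of $s$ as the surrogate for $(S(\tau_{e,\delta}\|\rho_T)+2)/\epsilon$ and of the constant $56$ as arising from the trade-off between the window width and the Berry--Esseen error (concretely, $3\sqrt{2\pi}\,\me^{2}\le 56$) matches the paper.
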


This theorem is a direct consequence of Proposition~\ref{strongerProp} and the following lemma. Theorem~\ref{equivalenceensembles}  follows from
Proposition~\ref{weakerProp}, the following lemma,  and the bound (we use the assumption $N>2$, the fact that $\Delta_{k,\xi,z,T}\ge (c(T)T^2)^{-3/2}=N^{3/2}(\tr(H^2\rho_T)-(\tr[H\rho_T])^2)^{-3/2}\ge N^{-3/2}$, and $\epsilon\le 1/2$, which, as we recall, is w.l.o.g.)
\begin{equation}
\label{simplify_bound}
\begin{split}
& \frac{ \log\left(
\frac{\sqrt{N}}{\Delta_{k,\xi,z,T}\ln^{2d}(N)}
\me^{56\sqrt{c(T)}\Delta_{k,\xi,z,T}\ln^{2d}(N)}\right)
+3}{\epsilon}
+\ln(N^{z+1}) \\
&\hspace{6cm} \le\frac{ 
56\sqrt{c(T)}\Delta_{k,\xi,z,T}\ln^{2d}(N)
+(5+\epsilon z)\ln(N)}{\epsilon \ln(2)}.
\end{split}
\end{equation}

\begin{lemma}  \label{opinqualitymicromacro} Let the canonical state $\rho_T$ (corresponding to a $k$-local Hamiltonian as in Eq.~\eqref{ham}) with energy density $u(T)$ and specific heat capacity $c(T)$ have $(\xi,z)$-exponentially decaying correlations. Let
the state $\tau\in \mathcal{D} ( \text{span}[\{|\nu\rangle\}_{\nu\in M_{e,\delta}}])$ with
\begin{equation}
 |e-u(T)|\le  \sqrt{c(T)T^2/N}
 \end{equation}
  and
\begin{equation}
28\Delta_{k,\xi,z,T}\sqrt{c(T)T^2}\frac{\ln^{2d}(N)}{\sqrt{N}}\le \delta\le \sqrt{c(T)T^2}.
\end{equation}
Then
\begin{equation}
\begin{split}
S(\tau\|\rho_T)&\le-S(\tau)+\log(|M_{e, \delta}|)+\log\left(
\frac{\sqrt{N}}{\Delta_{k,\xi,z,T}\ln^{2d}(N)}
\me^{56\sqrt{c(T)}\Delta_{k,\xi,z,T}\ln^{2d}(N)}\right).
\end{split}
\end{equation}
\end{lemma}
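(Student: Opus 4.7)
The plan is to rewrite $S(\tau\|\rho_T)$ using the Gibbs form of $\rho_T$ and then use the quantum Berry--Esseen theorem (Lemma~\ref{thmBerryEsseenThm}) to control the partition function. From $\log\rho_T=-H/(T\ln 2)-\log Z(T)$ and the thermodynamic identity $\log Z(T)=Ns(T)/\ln 2 - u(T)N/(T\ln 2)$ (which follows from $S(\rho_T)=Ns(T)$ and $\tr(H\rho_T)=u(T)N$), one obtains
\begin{equation*}
S(\tau\|\rho_T)=-S(\tau)+\frac{Ns(T)}{\ln 2}+\frac{\tr(\tau H)-u(T)N}{T\ln 2}.
\end{equation*}
Since $\tau$ is supported on eigenspaces of $H$ with eigenvalues in $[eN-\delta\sqrt{N},eN+\delta\sqrt{N}]$, we have $\tr(\tau H)\le eN+\delta\sqrt{N}$, which reduces the claim to establishing a lower bound on $\log|M_{e,\delta}|$ matching $Ns(T)/\ln 2+((e-u(T))N+\delta\sqrt{N})/(T\ln 2)$ up to the stated polylogarithmic correction.

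To produce this lower bound, I would isolate a sub-window $W\subseteq M_{e,\delta}$ abutting the upper edge of the window, with width on the order of $\min(T,2\delta\sqrt{N})$. Let $p_W:=\tr(\Pi_W\rho_T)$ and let $N_W$ be the number of eigenstates in $W$. Since every $\nu\in W$ satisfies $E_\nu\ge eN+\delta\sqrt{N}-T$, the bound $p_W Z(T)\le N_W\,\me^{-(eN+\delta\sqrt{N}-T)/T}$ yields, after taking logarithms,
\begin{equation*}
\log|M_{e,\delta}|\ge \log N_W\ge \log p_W+\log Z(T)+\frac{eN+\delta\sqrt{N}}{T\ln 2}-\frac{1}{\ln 2}.
\end{equation*}
Substituting $\log Z(T)=Ns(T)/\ln 2-u(T)N/(T\ln 2)$ rearranges this into the desired inequality, modulo the term $-\log p_W$. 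Lemma~\ref{thmBerryEsseenThm} guarantees that the canonical energy CDF lies within a Berry--Esseen error of order $\Delta_{k,\xi,z,T}\ln^{2d}(N)/\sqrt{N}$ of the Gaussian with mean $u(T)N$ and variance $Nc(T)T^2$. Under the hypotheses $|e-u(T)|\le\sqrt{c(T)T^2/N}$ and $\delta\le\sqrt{c(T)T^2}$, the sub-window $W$ sits within $O(1)$ standard deviations of the mean, so its Gaussian mass is at least of order $1/\sqrt{Nc(T)}$; the lower bound $\delta\ge 28\Delta_{k,\xi,z,T}\sqrt{c(T)T^2}\ln^{2d}(N)/\sqrt{N}$ ensures this Gaussian mass dominates the Berry--Esseen error, producing a lower bound $p_W$ of order $\Delta_{k,\xi,z,T}\ln^{2d}(N)/\sqrt{N}$ and hence $-\log p_W\le \log(\sqrt{N}/(\Delta_{k,\xi,z,T}\ln^{2d}(N)))+O(1)$.

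The main obstacle is the quantitative bookkeeping of the Berry--Esseen polylogarithmic factors so that they aggregate into exactly the stated bound $\log\!\bigl(\sqrt{N}/(\Delta_{k,\xi,z,T}\ln^{2d}(N))\,\me^{56\sqrt{c(T)}\Delta_{k,\xi,z,T}\ln^{2d}(N)}\bigr)$. In particular, the additive $56\sqrt{c(T)}\Delta_{k,\xi,z,T}\ln^{2d}(N)/\ln 2$ corresponds to the slack $2\delta_{\min}\sqrt{N}/(T\ln 2)$ that arises when the sub-window does not fill $M_{e,\delta}$, and must be carefully coordinated with the choice of sub-window width and the precise form of the Berry--Esseen error from Lemma~\ref{thmBerryEsseenThm}.
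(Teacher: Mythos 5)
Your approach is essentially the same as the paper's: both write $S(\tau\|\rho_T)=-S(\tau)-\tr[\tau\log\rho_T]$, use $\tr(\tau H)\le eN+\delta\sqrt{N}$ on the microcanonical shell, and then reduce matters to a lower bound on the Gibbs weight of a narrow sub-window abutting the upper edge of $M_{e,\delta}$, which is controlled by the quantum Berry--Esseen theorem (Lemma~\ref{thmBerryEsseenThm}). The paper organizes the algebra through the ratio $Z(T)/Z(T,\tilde e,\tilde\delta)$ rather than by substituting the identity $\log Z(T)=Ns(T)/\ln 2-u(T)N/(T\ln 2)$, but this is only a cosmetic difference.

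One point in your description needs fixing: the stated sub-window width ``on the order of $\min(T,2\delta\sqrt{N})$'' is generically too small. A window of width $T$ has Gaussian mass of order $T/\sigma\sim 1/\sqrt{Nc(T)}$, which does \emph{not} dominate the Berry--Esseen error $\Delta_{k,\xi,z,T}\ln^{2d}(N)/\sqrt{N}$ unless $\Delta_{k,\xi,z,T}\ln^{2d}(N)\sqrt{c(T)}\lesssim 1$; that quantity typically grows with $N$. The paper instead takes width $2\delta_0\sqrt{N}$ with $\delta_0=\tfrac{3\sqrt{2\pi}}{2}e^2\,\Delta_{k,\xi,z,T}\ln^{2d}(N)\sqrt{c(T)T^2}/\sqrt{N}$, i.e.\ a width $\sim\Delta_{k,\xi,z,T}\ln^{2d}(N)\sqrt{c(T)}\,T$, chosen precisely so that the Gaussian mass of the window is at least three times the Berry--Esseen error; the hypothesis $\delta\ge 28\Delta_{k,\xi,z,T}\sqrt{c(T)T^2}\ln^{2d}(N)/\sqrt{N}$ guarantees that this sub-window fits inside $M_{e,\delta}$. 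The ``cost'' of this wider window, $2\delta_0\sqrt{N}/(T\ln 2)\approx 56\sqrt{c(T)}\Delta_{k,\xi,z,T}\ln^{2d}(N)/\ln 2$, is exactly the additive term in the lemma's bound. Your final paragraph essentially identifies this, so the slip is in the earlier width specification rather than in your understanding of where the $56\sqrt{c(T)}\Delta_{k,\xi,z,T}\ln^{2d}(N)$ term originates.
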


\begin{proof}
We write $\sigma^2=NT^2c(T)$, $\mu=Nu(T)$, and define
\begin{equation}
Z(T, e , \delta) := \sum_{\nu \in  M_{e, \delta}} e^{- E_\nu / T}.
\end{equation}
Then, for any $\delta,\tilde \delta >0$, any $e,\tilde e\in\rr$, and any state $\tau=\sum_{\nu,\nu^\prime\in M_{e,\delta}}\tau_{\nu,\nu^\prime}|\nu\rangle\langle \nu^\prime|$
\begin{equation}
\label{relEntStart}
\begin{split}
S(\tau\|\rho_T)&=-S(\tau)-\text{tr}[\tau\log(\rho_T)]\\
&=-S(\tau)+\sum_{\nu\in M_{e,\delta}}\tau_{\nu,\nu}\log\left(\frac{Z(T)}{Z(T,\tilde e,\tilde \delta)}Z(T,\tilde e,\tilde \delta)\me^{E_\nu/T}\right)\\
&\le-S(\tau)+\log\left(\frac{Z(T)}{Z(T,\tilde e,\tilde \delta)}|M_{\tilde e,\tilde \delta}|\me^{(eN-\tilde eN+\delta\sqrt{N}+\tilde\delta\sqrt{N})/T}\right).
\end{split}
\end{equation}
We will choose $\tilde \delta$ and $\tilde e$ below after Eq.~\eqref{back_to_relEnt} and continue with bounding
\begin{equation}
\begin{split}
\frac{Z(T, \tilde e,\tilde \delta) }{Z(T)}&= \sum_{\nu \in M_{\tilde e,\tilde \delta}} \langle \nu | \rho_{T}|\nu\rangle
=\sum_{\nu:\,|E_\nu-\tilde eN|\le \delta\sqrt{N}}  \langle \nu |  \rho_{T} | \nu \rangle\\
&=\sum_{\nu:\, E_\nu\le \tilde eN+\tilde \delta\sqrt{N}}  \langle \nu |  \rho_{T} | \nu \rangle-\sum_{\nu:\,E_\nu<\tilde eN-\delta\sqrt{N}}  \langle \nu |  \rho_{T} | \nu \rangle\\
&\ge F(\tilde eN+\tilde \delta\sqrt{N})-F(\tilde eN-\tilde \delta\sqrt{N})\\
&\ge G(\tilde eN+\tilde \delta\sqrt{N})-G(\tilde eN-\tilde \delta\sqrt{N})-2\sup_x|F(x)-G(x)|,
\end{split}
\end{equation}
where the (Gaussian) cumulative distribution ($G$) $F$ is defined in Lemma~\ref{thmBerryEsseenThm}.
By the mean value theorem, for some $x\in (-\tilde \delta\sqrt{N},\tilde \delta\sqrt{N})$,
\begin{equation}
\begin{split}
G(\tilde eN+\tilde \delta\sqrt{N})-G(\tilde eN-\tilde \delta\sqrt{N})&=2\tilde \delta\sqrt{N}\frac{1}{\sqrt{2\pi\sigma^2}}\me^{-\frac{(\tilde eN-\mu+x)^2}{2\sigma^2}},
\end{split}
\end{equation}
i.e., by Lemma~\ref{thmBerryEsseenThm} 
\begin{equation}
\begin{split}
\frac{Z(T,\tilde e,\tilde \delta) }{Z(T)}&\ge \frac{2}{\sigma}\tilde  \delta\sqrt{N}\frac{1}{\sqrt{2\pi}}\me^{-\frac{(|\tilde eN-\mu|+\tilde \delta\sqrt{N})^2}{2\sigma^2}}-2\Delta_{k,\xi,z,T}\frac{\ln^{2d}(N)}{\sqrt{N}}.
\end{split}
\end{equation}
Hence, for
\begin{equation}
\label{condition_on_delta}
\begin{split}
\frac{2}{\sigma}\tilde  \delta\sqrt{N}\frac{1}{\sqrt{2\pi}}\me^{-\frac{(|\tilde eN-\mu|+\tilde \delta\sqrt{N})^2}{2\sigma^2}}\ge 3\Delta_{k,\xi,z,T}\frac{\ln^{2d}(N)}{\sqrt{N}}
\end{split}
\end{equation}
we have
\begin{equation}
\label{back_to_relEnt}
\begin{split}
\frac{Z(T,\tilde e,\tilde \delta) }{Z(T)}&\ge \Delta_{k,\xi,z,T}\frac{\ln^{2d}(N)}{\sqrt{N}}.
\end{split}
\end{equation}
We now  set $\tilde eN=eN+\delta\sqrt{N}-\tilde \delta\sqrt{N}$. 
Assuming $|eN-\mu|\le\sigma$ and $\tilde \delta\le \delta\le \sigma/\sqrt{N}$, 
this choice implies $|\tilde eN-\mu|+\tilde \delta\sqrt{N}\le |eN-\mu|+\delta\sqrt{N}\le 2\sigma$, i.e., the condition in Eq.~\eqref{condition_on_delta} is implied by
\begin{equation}
\begin{split}
\delta_0:= \frac{3\sqrt{2\pi}\Delta_{k,\xi,z,T}}{2}\me^{2}\frac{\ln^{2d}(N)}{\sqrt{N}}\frac{\sigma}{\sqrt{N}}\le \tilde  \delta\le \delta\le\frac{\sigma}{\sqrt{N}}.
\end{split}
\end{equation}
We return to Eq.~\eqref{relEntStart} and set $\tilde \delta=\delta_0$ to find
\begin{equation}
\begin{split}
S(\tau\|\rho_T)&\le-S(\tau)+\log\left(
\frac{\sqrt{N}}{\Delta_{k,\xi,z,T}\ln^{2d}(N)}
|M_{\tilde e,\tilde \delta}|\me^{2\delta_0\sqrt{N}/T}\right).
\end{split}
\end{equation}
Finally, as we assumed that $\delta\ge \delta_0$, $|E_k-\tilde eN|\le \sqrt{N}\delta_0$ implies $|E_k- eN|\le |E_k-\tilde eN|+|\tilde eN-eN|\le |\delta\sqrt{N}- \delta_0\sqrt{N}|+\delta_0\sqrt{N}=\delta\sqrt{N}$
such that $|M_{\tilde e,\delta_0}|\le |M_{e,\delta}|$.
\end{proof}

\subsection{Proof of Corollary \ref{corollaryProp}}
The first part follows directly from Proposition~\ref{weakerProp}, Lemma~\ref{opinqualitymicromacro}, and the bound in Eq.~\eqref{simplify_bound}.

By Ref.~\cite{PSW06}, for any $\varepsilon>0$, with probability at least
\begin{equation}
1-2\exp\left(-\frac{|M_{e,\delta}|\varepsilon^2}{18\pi^3}\right)
\end{equation}
one has
\begin{equation}
\|\pi_C-(\tau_{e,\delta})_C\|_1\le \varepsilon+\sqrt{\frac{d_S}{d_E^{eff}}}\le\varepsilon+\frac{D^{l^d}}{\sqrt{|M_{e,\delta}|}}.
\end{equation}
Further, as in the proof of Lemma~\ref{opinqualitymicromacro} (see the discussion around Eq.~\eqref{back_to_relEnt}),
\begin{equation}
\begin{split}
S(\varrho_T)&=Nu(T)/T+\ln\left(Z(T)\right)\\
&\le Nu(T)/T-eN/T+\ln\left(|M_{e,\delta}|\right)+\ln\left(\frac{\sqrt{N}}{\Delta_{k,\xi,z,T}\ln^{2d}(N)}\right)+\delta_0\sqrt{N}/T\\
&\le 2\sigma/T+\ln\left(|M_{e,\delta}|\right)+\ln\left(N^2\right)\le 2(\sqrt{c(T)}+1)\sqrt{N}+\ln\left(|M_{e,\delta}|\right),
\end{split}
\end{equation}
where we used $N>2$ and $\Delta_{k,\xi,z,T}\ge (c(T)T^2)^{-3/2}=N^{3/2}(\tr(H^2\rho_T)-(\tr[H\rho_T])^2)^{-3/2}\ge N^{-3/2}$ to obtain the last line. 
Hence, with probability at least
\begin{equation}
1-2\exp\left(-\frac{\varepsilon^2}{18\pi^3}\exp\left[N\left(s(T)-\frac{2(\sqrt{c(T)}+1)}{\sqrt{N}}\right)\right]\right)
\end{equation}
we have
\begin{equation}
\|\pi_C-(\tau_{e,\delta})_C\|_1\le \varepsilon+D^{l^d}\exp\left[-\frac{N}{2}\left(s(T)-\frac{2\sqrt{c(T)}+2}{\sqrt{N}}\right)\right].
\end{equation}

\subsection{Auxiliary Lemmas}
\label{auxLemmas}
The following auxiliary lemmas were used in the proofs above.
The first is the main result of \cite{CBG15}, a Berry--Esseen bound for quantum lattice systems.

\begin{lemma}  \label{thmBerryEsseenThm}
On $\Lambda = \{1,\dots,n\}^{\times d}$ with $N = n^d>1$ sites let $H$ be a $k$-local Hamiltonian as in Eq.~\eqref{ham}  and let $\rho$ a state with $(\xi,z)$-exponentially decaying correlations.  
Let
\begin{equation}
\label{defF}
F(x)=\sum_{k:\,   E_k \le x}\langle k|\rho |k\rangle,\;\;\;\mu=\tr(\rho H),\;\;\;\sigma^2=\tr\left( \rho (H-\mu)^2\right),
\end{equation}
and 
\begin{equation}
G(x)=\frac{1}{\sqrt{2\pi\sigma^2}}\int_{-\infty}^x\md y\,\me^{-\frac{(y-\mu)^2}{2\sigma^2}}
\end{equation}
the Gaussian cumulative distribution with mean $\mu$ and variance $\sigma^2$.
Then
\begin{equation}  \label{eqBEmainthm}
\sup_x|F(x)-G(x)|  \le \Delta \frac{\ln^{2d}(N)}{\sqrt{N}},
\end{equation}
where
\begin{equation}  
\label{BEdelta}
\Delta= C_d \frac{(\max\{k,\xi\}(z+1))^{2d}}{\sigma/\sqrt{N}}\max\left\{\frac{1}{\max\{k,\xi\}(z+1)\ln(N)},\frac{1}{\sigma^2/N}\right\}
\end{equation}
and $C_d\ge 1$ depends only on the dimension of the lattice.
\end{lemma}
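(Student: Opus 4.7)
The plan is to adapt the classical Berry--Esseen theorem to the quantum lattice setting via Esseen's smoothing inequality combined with a cluster expansion of the cumulant generating function of $H$ under $\rho$, with the exponential decay of correlations playing the role of independence in the classical proof.

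First I would apply Esseen's smoothing inequality: for any $T>0$,
\begin{equation}
\sup_x|F(x)-G(x)|\le \frac{1}{\pi}\int_{-T}^{T}\frac{|\hat F(t)-\hat G(t)|}{|t|}\,\md t+\frac{24}{\pi T\sigma\sqrt{2\pi}},
\end{equation}
with quantum characteristic function $\hat F(t)=\tr(\rho \me^{\mi tH})$ and Gaussian characteristic function $\hat G(t)=\me^{\mi t\mu-t^2\sigma^2/2}$. This reduces the Kolmogorov distance to an estimate of $|\hat F(t)-\hat G(t)|$ for $|t|\le T$, with $T$ to be chosen to balance the two error contributions. Writing $\hat F(t)=\me^{\mi t\mu+\Phi(t)}$ with $\Phi(t)=\log\tr(\rho\me^{\mi t(H-\mu)})$ and Taylor expanding about $t=0$ produces the quantum cumulants $\kappa_n$ of $H$, with $\kappa_2=\sigma^2$, so that
\begin{equation}
\hat F(t)-\hat G(t)=\me^{\mi t\mu-t^2\sigma^2/2}\bigl(\me^{R(t)}-1\bigr),\qquad R(t)=\sum_{n\ge 3}\frac{(\mi t)^n}{n!}\kappa_n,
\end{equation}
and it suffices to bound $|R(t)|$ on a neighbourhood of the origin of radius slightly larger than $1/\sigma$.

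The heart of the argument is an extensive bound on the quantum cumulants of the form $|\kappa_n|\le n!\,N\,\bigl(C_d\max\{k,\xi\}(z+1)\ln N\bigr)^{d(n-1)}$ (up to polynomially small tails), which I would obtain from a cluster expansion. Since $H=\sum_iH_i$ is $k$-local, expanding $\kappa_n$ multilinearly produces a sum $\sum_{i_1,\ldots,i_n}c(H_{i_1},\ldots,H_{i_n})$ of joint cumulants, and by $(\xi,z)$-exponential decay of correlations each such joint cumulant decays like $N^z\me^{-D/\xi}$, where $D$ is the spanning-tree length of the supporting cluster. Truncating at length scale $r\sim\xi(z+1)\ln N$ introduces only polynomially small errors, while the number of connected clusters of $n$ sites with diameter at most $r$ anchored at a given site is of order $r^{d(n-1)}$, producing the polylogarithmic factor. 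Substituting this bound into $R(t)$, combining it with Esseen's inequality, and choosing $T\sim \sigma/(\max\{k,\xi\}(z+1)\ln N)^d$ then yields the claimed $\ln^{2d}(N)/\sqrt{N}$ rate with the constant $\Delta$ stated.

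The main obstacle is establishing the cumulant bound in the genuinely non-commutative setting. Classically, joint cumulants of local observables under a state with exponentially decaying correlations are controlled by standard inclusion--exclusion over set partitions, and the combinatorics is clean. In the quantum case, however, the cumulants $\kappa_n$ arise as derivatives of $\log\tr(\rho\me^{\mi tH})$ and, after a Dyson expansion of $\me^{\mi tH}$, contain additional sums over time orderings of the $H_{i_j}$. One must show that these time-ordered objects still enjoy exponential decay in the diameter of their support with combinatorial prefactors no worse than $n!$, so as to recover the classical extensive scaling up to controllable polylogarithmic factors. Getting the exponent $2d$ on $\ln N$ (rather than a larger power) is the delicate point, and I expect it to be where the bulk of the technical work in \cite{CBG15} lies.
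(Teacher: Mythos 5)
The paper itself does not prove Lemma~\ref{thmBerryEsseenThm}: it is imported wholesale from Ref.~\cite{CBG15} (``in preparation''), so there is no in-paper proof to compare your attempt against. Judged on its own terms, your outline has the right architecture --- Esseen's smoothing inequality reduces the Kolmogorov distance to a bound on $|\tr(\rho\me^{\mi tH})-\hat G(t)|$ near the origin, and since $F$ is the distribution function of the classical random variable $E_\nu$ with weights $\langle\nu|\rho|\nu\rangle$, its cumulants really are the classical cumulants of the moment sequence $\tr(\rho H^n)$. Your bookkeeping is also consistent: with $\sigma^2\sim N$ and a cumulant bound $|\kappa_n|\lesssim n!\,N\,L^{d(n-1)}$, $L\sim\max\{k,\xi\}(z+1)\ln N$, the Gaussian integral of $|R(t)|/|t|$ produces the claimed $L^{2d}/\sqrt N$ rate. (A minor repair: you should cut the Esseen integral off at $T\sim 1/L^{d}$ rather than $T\sim\sigma/L^{d}$; the larger choice would force you to control $|\hat F(t)|$ far outside the radius of convergence of the cumulant series, whereas the smaller choice already makes the remainder term $\sim 1/(T\sigma)$ negligible against the target.)

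The genuine gap is exactly the step you flag but do not carry out: the extensive cumulant bound. The hypothesis~\eqref{decaycor} controls only \emph{two-point} connected correlations of observables in disjoint regions, and passing from this to bounds on $n$-point joint cumulants $c(H_{i_1},\dots,H_{i_n})$ is the entire content of the cited theorem, not a routine extension. The standard mechanism --- a joint cumulant vanishes when the arguments split into two mutually independent groups, hence is small under approximate factorization --- requires (i) that the moments factorize for \emph{every} ordering of the factors, which holds only because operators on disjoint, separated supports commute, so overlapping or nearby clusters must be treated separately; and (ii) a Möbius inversion over the partition lattice whose combinatorial prefactor (Bell numbers times products of moments) must be shown not to swamp the single small factor $N^z\me^{-r/\xi}$ gained from one good bipartition. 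Establishing that this yields $|\kappa_n|\le n!\,N\,L^{d(n-1)}$ with the stated dependence on $\xi$, $z$, $k$ --- and hence the exponent $2d$ on the logarithm rather than something worse --- is precisely what Ref.~\cite{CBG15} is for. As written, your proposal asserts the conclusion of that reference as its key lemma, so it is a correct road map rather than a proof.
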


 The next lemma was originally proven by Datta and Renner in \cite{DR09}, in a different formulation, and appeared in a form equivalent to the one bellow as Lemma C.5 of \cite{BP09}.

\begin{lemma} \label{datarenner}
Let $\tilde{\pi}, \rho, \tilde{\rho} \in {\cal D}(\mathbb{H})$ be such that $S_{\max}(\tilde{\pi} \| \rho) \leq \lambda$ and $\kappa:=2^\lambda\|\tilde{\rho} -\rho\|_1<1$. Then 
there is a state $\pi$ such that
\begin{equation}
S_{\max}(\pi \| \tilde{\rho}) \leq \lambda+\log\left(\frac{1}{1-\kappa}\right)
\end{equation}
and $\|\tilde{\pi}-\pi\|_1\le \sqrt{8\kappa}$.
\end{lemma}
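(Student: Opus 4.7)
The hypothesis $S_{\max}(\tilde\pi\|\rho)\le\lambda$ is the operator inequality $\tilde\pi\le 2^\lambda\rho$, and the goal is to build a state $\pi$ close to $\tilde\pi$ with $\pi\le\frac{2^\lambda}{1-\kappa}\tilde\rho$. My plan is to transport the given inequality from $\rho$ to $\tilde\rho$ by sandwiching $\tilde\pi$ with the ``twist'' $T:=\tilde\rho^{1/2}\rho^{-1/2}$, and then to show that this twist costs only $O(\sqrt\kappa)$ in trace norm. (Replace $\rho,\tilde\rho$ by $\rho+\varepsilon I,\tilde\rho+\varepsilon I$ if necessary and let $\varepsilon\to 0$ at the end, so that all inverses and square roots are well defined.)

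\textbf{Step 1 (the $S_{\max}$ bound).} Define $\pi_0:=T\tilde\pi T^\dagger$ and $\pi:=\pi_0/\tr(\pi_0)$. Conjugating $\tilde\pi\le 2^\lambda\rho$ by $T$ gives $\pi_0\le 2^\lambda T\rho T^\dagger=2^\lambda\tilde\rho$. By cyclicity of trace,
\begin{equation}
\tr(\pi_0)=\tr\bigl(\rho^{-1/2}\tilde\rho\rho^{-1/2}\tilde\pi\bigr)=1+\tr\bigl((\tilde\rho-\rho)\rho^{-1/2}\tilde\pi\rho^{-1/2}\bigr),
\end{equation}
and since $\tilde\pi\le 2^\lambda\rho$ is equivalent to $\rho^{-1/2}\tilde\pi\rho^{-1/2}\le 2^\lambda I$, H\"older's inequality gives $|\tr(\pi_0)-1|\le 2^\lambda\|\tilde\rho-\rho\|_1=\kappa$. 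Hence $\tr(\pi_0)\ge 1-\kappa$ and $\pi\le\frac{2^\lambda}{1-\kappa}\tilde\rho$, which is the first claim.

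\textbf{Step 2 (trace-norm closeness).} Split $\pi_0-\tilde\pi=(T-I)\tilde\pi T^\dagger+\tilde\pi(T^\dagger-I)$ and factor $\tilde\pi=\tilde\pi^{1/2}\tilde\pi^{1/2}$; then $\|AB\|_1\le\|A\|_2\|B\|_2$ reduces the estimate to the Hilbert--Schmidt quantity
\begin{equation}
\|(T-I)\tilde\pi^{1/2}\|_2^2=\tr\bigl((T-I)\tilde\pi(T-I)^\dagger\bigr)\le 2^\lambda\tr\bigl((T-I)\rho(T-I)^\dagger\bigr),
\end{equation}
where the inequality uses the hypothesis. Since $T\rho T^\dagger=\tilde\rho$ and $T\rho=\tilde\rho^{1/2}\rho^{1/2}$, the cross terms telescope to $(T-I)\rho(T-I)^\dagger=(\tilde\rho^{1/2}-\rho^{1/2})^2$, and the Powers--St{\o}rmer inequality $\|\tilde\rho^{1/2}-\rho^{1/2}\|_2^2\le\|\tilde\rho-\rho\|_1$ yields $\|(T-I)\tilde\pi^{1/2}\|_2\le\sqrt\kappa$. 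Combined with $\|\tilde\pi^{1/2}T^\dagger\|_2^2=\tr(\pi_0)\le 1+\kappa$ and $\|\tilde\pi^{1/2}\|_2=1$, this gives $\|\pi_0-\tilde\pi\|_1=O(\sqrt\kappa)$; adding the normalization correction $\|\pi-\pi_0\|_1=|1-\tr(\pi_0)|\le\kappa$ and carefully optimizing the constants produces $\|\pi-\tilde\pi\|_1\le\sqrt{8\kappa}$.

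\textbf{Main obstacle.} The technical heart is controlling the unbounded factor $\rho^{-1/2}$ hidden in $T-I=(\tilde\rho^{1/2}-\rho^{1/2})\rho^{-1/2}$, which naively blows up whenever $\rho$ has small eigenvalues. The essential cancellation is that the hypothesis $\tilde\pi\le 2^\lambda\rho$ is precisely $\rho^{-1/2}\tilde\pi\rho^{-1/2}\le 2^\lambda I$, so that the sandwich $(T-I)\tilde\pi(T-I)^\dagger$ remains bounded. This cancellation, together with the square-root scaling from Powers--St{\o}rmer, is the source of the $\sqrt\kappa$ (rather than $\kappa$) error in the final bound.
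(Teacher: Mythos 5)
Your approach is genuinely different from the paper's, and most of it is correct; the issue is in the final constant. The paper (following BP09/DR09) takes the transition operator to be $T=Y^{1/2}(Y+\Delta)^{-1/2}=\tilde\rho^{1/2}\bigl(\tilde\rho+|\rho-\tilde\rho|\bigr)^{-1/2}$, which is a \emph{contraction}: $T^\dagger T\le\id$, hence also $\frac{1}{2}(T+T^\dagger)\le\id$. This is the crucial structural feature. It lets one purify $\tilde\pi$, write $|\psi'\rangle=(T\otimes\id)|\psi\rangle$, and bound $1-\Re\langle\psi|\psi'\rangle=\tr\bigl[\tilde\pi(\id-\tfrac{1}{2}(T+T^\dagger))\bigr]\le\tr\bigl[(Y+\Delta)(\id-\tfrac{1}{2}(T+T^\dagger))\bigr]\le\tr\Delta=\kappa$ --- a bound \emph{linear} in $\kappa$. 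Feeding this into the pure-state trace distance $2\sqrt{1-|\langle\psi|\hat\psi'\rangle|^2}$ (with $\|\psi'\|\le 1$ since $T$ is a contraction) cleanly yields $2\sqrt{1-(1-\kappa)^2}\le\sqrt{8\kappa}$.

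Your $T=\tilde\rho^{1/2}\rho^{-1/2}$ is not a contraction. Step~1 ($S_{\max}$ bound, via conjugation and H\"older) is correct as written. But in Step~2, with your $T$, the overlap bound degrades to $|\tr(\tilde\pi T)-1|\le\|(T-\id)\tilde\pi^{1/2}\|_2\le\sqrt\kappa$ (a $\sqrt\kappa$, not $\kappa$, bound), so the purification route would only give $O(\kappa^{1/4})$. You therefore correctly switch to the Hilbert--Schmidt decomposition $\pi_0-\tilde\pi=(T-\id)\tilde\pi T^\dagger+\tilde\pi(T^\dagger-\id)$. Working that through with $a:=\|(T-\id)\tilde\pi^{1/2}\|_2\le\sqrt\kappa$ and $b:=\|\tilde\pi^{1/2}T^\dagger\|_2=\sqrt{\tr\pi_0}\le\sqrt{1+\kappa}$, and adding $\|\pi-\pi_0\|_1=|1-\tr\pi_0|\le\kappa$, the best you get is
\begin{equation}
\|\pi-\tilde\pi\|_1\le\kappa+\sqrt\kappa\bigl(\sqrt{1+\kappa}+1\bigr),
\end{equation}
which has the right $\Theta(\sqrt\kappa)$ scaling (and is in fact sharper than $\sqrt{8\kappa}$ for small $\kappa$), but exceeds $\sqrt{8\kappa}$ for $\kappa$ roughly in $(0.41,0.5)$ --- a range where $\sqrt{8\kappa}<2$, so the trivial bound does not rescue you. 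Your ``carefully optimizing the constants'' is doing real work that you have not supplied, and it is not clear the stated constant $\sqrt 8$ is reachable with this $T$. This gap is harmless for the paper's application (where $\kappa\le\epsilon$ is small), but the lemma as stated asserts $\sqrt{8\kappa}$ for all $\kappa<1$. To get that constant cleanly you would want to replace your $T$ by the paper's contractive one and use the purification argument.
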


\begin{proof}
The statement follows from Lemma C.5 of \cite{BP09} with $Y = 2^{\lambda} \tilde{\rho}$ and $\Delta = 2^{\lambda} |\rho - \tilde{\rho}|$. For completeness, we give the proof following \cite{BP09} and \cite{DR09}: 
We have 
\begin{equation}
\tilde{\pi}\le 2^\lambda \rho\le 2^\lambda \tilde{\rho}+2^\lambda |\rho-\tilde{\rho}| =Y+\Delta.
\end{equation}
Let $T=Y^{1/2}(Y+\Delta)^{-1/2}$ (with the inverse the generalized  Moore--Penrose pseudoinverse) and $\pi=T\tilde{\pi}T^\dagger/\text{tr}[T^\dagger T\tilde{\pi}]$.
We find $T\tilde{\pi}T^\dagger\le Y^{1/2}(Y+\Delta)^{-1/2}(Y+\Delta)(Y+\Delta)^{-1/2}Y^{1/2}\le Y$. Further, $T^\dagger T=(Y+\Delta)^{-1/2}Y(Y+\Delta)^{-1/2}\le \id$  such that
\begin{equation}
\begin{split}
 \text{tr}[(\id-T^\dagger T)\tilde{\pi}]&\le\text{tr}[(\id-T^\dagger T)(Y+\Delta)]
= \text{tr}[Y+\Delta]-\text{tr}[T^\dagger T(Y+\Delta)]\\
&= \text{tr}[Y+\Delta]-\text{tr}[(Y+\Delta)^{-1/2}(Y+\Delta)(Y+\Delta)^{-1/2}(Y+\Delta)]\\
&\hspace{1cm}+\text{tr}[(Y+\Delta)^{-1/2}\Delta(Y+\Delta)^{-1/2}(Y+\Delta)]\\
&=\text{tr}[\Delta(Y+\Delta)^{-1/2}(Y+\Delta)(Y+\Delta)^{-1/2}]\le \text{tr}[\Delta],
\end{split}
\end{equation}
i.e., 
\begin{equation}
\text{tr}[T^\dagger T\tilde{\pi}]\ge 1-\text{tr}[\Delta]=1-\kappa>0
\end{equation}
such that 
\begin{equation}
\pi=T\tilde{\pi}T^\dagger/\text{tr}[T^\dagger T\tilde{\pi}] \le \frac{Y}{\text{tr}[T^\dagger T\tilde{\pi}]}\le \frac{2^{\lambda} }{1-\kappa}\tilde{\rho},
\end{equation}
 i.e., $S_{\text{max}}(\pi\|\tilde{\rho})\le \log(\frac{2^{\lambda} }{1-\kappa})$.
Now let $|\psi\rangle$ be a purification of $\tilde{\pi}$, $\text{tr}_R[|\psi\rangle\langle\psi|]=\tilde{\pi}$, and write $|\psi^\prime\rangle$ for the unnormalized vector $|\psi^\prime\rangle=T\otimes\id|\psi\rangle$. Then $\langle\psi|\psi^\prime\rangle=\langle\psi|T\otimes\id|\psi\rangle
=\text{tr}[\tilde{\pi}T]$ such that, as $\frac{T+T^\dagger}{2}\le \id$ (which follows from $T^\dagger T\le \id$), we have
\begin{equation}
\begin{split}
1-|\langle\psi|\psi^\prime\rangle|&\le 1-\frac{\langle\psi|\psi^\prime\rangle+\langle\psi|\psi^\prime\rangle^*}{2}
=\text{tr}\left[\tilde{\pi}\left(\id-\frac{T+T^\dagger}{2}\right)\right]
\le \text{tr}\left[(Y+\Delta)\left(\id-\frac{T+T^\dagger}{2}\right)\right]\\
&=\text{tr}\left[(Y+\Delta)\right]
-\frac{1}{2}\text{tr}\left[(Y+\Delta)Y^{1/2}(Y+\Delta)^{-1/2}\right]
-\frac{1}{2}\text{tr}\left[(Y+\Delta)(Y+\Delta)^{-1/2}Y^{1/2}\right]\\
&=\text{tr}\left[(Y+\Delta)\right]
-\text{tr}\left[(Y+\Delta)^{1/2}Y^{1/2}\right]\le \text{tr}[\Delta].
\end{split}
\end{equation}
Finally, 
\begin{equation}
\begin{split}
\|\tilde{\pi}-\pi\|_1&=\left\|\text{tr}_R\left[|\psi\rangle\langle\psi|-\frac{|\psi^\prime\rangle\langle\psi^\prime|}{\text{tr}[T^\dagger T\tilde{\pi}]}\right]\right\|_1
\le \left\||\psi\rangle\langle\psi|-\frac{|\psi^\prime\rangle\langle\psi^\prime|}{\text{tr}[T^\dagger T\tilde{\pi}]}\right\|_1\\
&\le 2\sqrt{1-\frac{|\langle\psi|\psi^\prime\rangle|^2}{\text{tr}[T^\dagger T\tilde{\pi}]}}\le
2\sqrt{1-(1-\text{tr}[\Delta])^2}\le \sqrt{8\kappa}.
\end{split}
\end{equation}
\end{proof}

\begin{lemma}  \label{tracenormboundfromxi} For all $\rho_{A_1\cdots A_M} \in {\cal D}((\mathbb{C}^D)^{\otimes M})$   
\begin{equation}
\left \Vert \rho_{A_1\cdots A_M} - \rho_{A_1} \otimes \ldots \otimes \rho_{A_M}  \right \Vert_1 \leq D^2 \sum_{j=2}^M\text{cor}_{\rho_{A_1\cdots A_j}}(A_1 \cdots A_{j-1}, A_{j}).
\end{equation}
\end{lemma}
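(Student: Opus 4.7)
The plan is to induct on $M$, with the base case $M=2$ doing all the work and the inductive step being a routine triangle inequality.

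For the base case, I would use the duality $\Vert X \Vert_1 = \max_{\Vert M \Vert \le 1} |\tr(MX)|$. Given any operator $M$ on $A_1\otimes A_2$ with $\Vert M\Vert\le 1$, expand it in the matrix unit basis on the \emph{second} factor only:
\begin{equation}
M = \sum_{j,k=1}^{D} M_{jk}\otimes|j\rangle\langle k|,\qquad M_{jk} := (I_{A_1}\otimes\langle j|)\,M\,(I_{A_1}\otimes|k\rangle).
\end{equation}
Each $M_{jk}$ is an operator supported on $A_1$ with $\Vert M_{jk}\Vert\le\Vert M\Vert\le 1$, and $\Vert\,|j\rangle\langle k|\,\Vert=1$. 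Writing $\Delta=\rho_{A_1 A_2}-\rho_{A_1}\otimes\rho_{A_2}$ and applying the definition of $\text{cor}_{\rho_{A_1 A_2}}(A_1,A_2)$ termwise gives
\begin{equation}
|\tr(M\Delta)| \le \sum_{j,k=1}^{D}|\tr((M_{jk}\otimes|j\rangle\langle k|)\Delta)| \le D^{2}\,\text{cor}_{\rho_{A_1 A_2}}(A_1,A_2),
\end{equation}
since the definition of $\text{cor}$ allows arbitrary (not necessarily Hermitian) local operators. Taking the supremum over $M$ yields the $M=2$ case.

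For the inductive step, write the telescoping decomposition
\begin{equation}
\rho_{A_1\cdots A_M}-\rho_{A_1}\otimes\cdots\otimes\rho_{A_M} = \bigl(\rho_{A_1\cdots A_M}-\rho_{A_1\cdots A_{M-1}}\otimes\rho_{A_M}\bigr) + \bigl(\rho_{A_1\cdots A_{M-1}}-\rho_{A_1}\otimes\cdots\otimes\rho_{A_{M-1}}\bigr)\otimes\rho_{A_M}.
\end{equation}
Apply the triangle inequality. The first summand is handled by the base case with the bipartition $A:=A_1\cdots A_{M-1}$ versus $B:=A_M$ (the base case argument never used that $A_1$ was a single site, only that the expansion was done on the $D$-dimensional factor being peeled off), giving the bound $D^{2}\,\text{cor}_{\rho_{A_1\cdots A_M}}(A_1\cdots A_{M-1},A_M)$. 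For the second summand, $\Vert X\otimes\rho_{A_M}\Vert_1=\Vert X\Vert_1$ because $\rho_{A_M}$ is a state, so it reduces to the $(M-1)$-term claim, which is the inductive hypothesis.

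The only nontrivial step is the base case, and the crucial observation there is that the expansion on the peeled-off factor uses exactly $D^{2}$ matrix units, each of operator norm one, so the $D^{2}$ prefactor in the statement is fixed by this count rather than by the (potentially huge) dimension of $A_1\cdots A_{M-1}$. This is what prevents a factor that grows with $M$ from accumulating in the induction.
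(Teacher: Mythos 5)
Your proof is correct and follows the same two-step structure as the paper's: telescope $\rho_{A_1\cdots A_M}-\rho_{A_1}\otimes\cdots\otimes\rho_{A_M}$ into $M-1$ differences of the form $\bigl(\rho_{A_1\cdots A_j}-\rho_{A_1\cdots A_{j-1}}\otimes\rho_{A_j}\bigr)\otimes\rho_{A_{j+1}}\otimes\cdots\otimes\rho_{A_M}$, then bound each bipartite piece. The one difference is that the paper simply cites Lemma~20 of~\cite{BH12} for the bipartite estimate $\Vert\rho_{AB}-\rho_A\otimes\rho_B\Vert_1\le D^2\,\text{cor}_\rho(A,B)$, whereas you reprove it from scratch via trace-norm duality and the $D^2$-term matrix-unit expansion on the factor being peeled off; your remark that the $D^2$ count is tied to the small factor (not the growing remainder) is exactly the point that keeps the constant $M$-independent, and it is what the cited lemma encapsulates.
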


\begin{proof}
By Lemma 20 of Ref.~\cite{BH12},
for every $j =1,\dots,M$,
\begin{equation}
\left \Vert  \rho_{A_1\cdots A_j} -  \rho_{A_1\cdots A_{j-1}} \otimes \rho_{A_j} \right \Vert_1 \leq D^2 \text{cor}_{\rho_{A_1\cdots A_M}}(A_1 \cdots A_{j-1}, A_{j}). 
\end{equation}
Then  by a telescoping sum and triangle inequality,
\begin{eqnarray}
\left \Vert \rho_{A_1 \cdots A_M}    - \rho_{A_1} \otimes \cdots \otimes \rho_{A_M}     \right \Vert_1 &=&  \Bigl \Vert  \sum_{j=2}^M \left(L_{j} - L_{j-1} \right)  \Bigr \Vert _1      \nonumber \\
&\leq&
 \sum_{j=2}^M \bigl\Vert \bigl(\rho_{A_1 \cdots A_{j}} -  \rho_{A_1 \cdots A_{j-1}} \otimes \rho_{A_j} \bigr)\otimes \rho_{A_{j+1}}\otimes\cdots \otimes\rho_{A_M}\bigr\Vert_1
\nonumber \\
&\leq&
D^2 \sum_{j=2}^M\text{cor}_{\rho_{A_1\cdots A_j}}(A_1 \cdots A_{j-1}, A_{j}),
\end{eqnarray}
with $L_{j} = \rho_{A_1 \cdots A_{j}} \otimes \rho_{A_{j+1}} \otimes \cdots \otimes \rho_{A_M}$.
\end{proof}

\acknowledgements
FB acknowledges EPSRC for financial support. MC acknowledges the EU Integrated Project SIQS and the Alexander von Humboldt foundation for financial support. Part of this work was done while FB was visiting the Simons Institute for the Theory of Computing in the program Quantum Hamiltonian Complexity.

\end{document}